\DeclareMathOperator{\ORicci}{O-Ricci}
\theoremstyle{thmstyleone}%
\newtheorem{theorem}{Theorem}
\newtheorem{proposition}[theorem]{Proposition}%
\newtheorem{lemma}{Lemma}
\theoremstyle{thmstyletwo}%
\newtheorem{example}{Example}%
\theoremstyle{thmstylethree}%
\begin{document}
\setcounter{page}{0}
\title[ORicci Fragility]{On the Ollivier-Ricci curvature as fragility indicator of the stock markets}

\author{\sur{Joaquin} \fnm{Sanchez Garcia} }\email{joaqsan@math.utoronto.ca}
\author{\sur{Sebastian} \fnm{Gherghe}    \email{sebastian.gherghe@mail.utoronto.ca}}


\abstract{Recently, an indicator for stock market fragility and crash size in terms of the Ollivier-Ricci curvature has been proposed in \cite{RicciFinance}. We study analytical and empirical properties of such indicator, test its elasticity with respect to different parameters and provide heuristics for the parameters involved. We show when and how the indicator accurately describes a financial crisis. We also propose an alternate method for calculating the indicator using a specific sub-graph with special curvature properties.}

\keywords{Ollivier-Ricci, curvature, finance, stock market, fragility, Wasserstein, stock correlations, optimal transport on graphs}

\maketitle
\newpage

\tableofcontents
\newpage

\section{Introduction}
Since the advent of quantitative finance, there has been an increasing need for fragility estimators which study hidden connections between entities. The 2008-2009 financial crisis is perhaps the best example of how catastrophic the consequences of a crash can be. While the mathematical and economical definition of a financial crisis varies from one reference to another, it is imperative to have good metrics and indicators to evaluate the fragility of the market in a given period. We require an estimator to:
\begin{enumerate}
    \item \label{first} Be able to capture the current state of the stock market and adapt quickly. 
    \item Take into consideration possible hidden interconnections.
    \item \label{third} Be simple but robust.
\end{enumerate}
The recent works \cite{RicciFinance} and \cite{ManyCurvaturesFinance} have proposed a very interesting object as an indicator of market fragility: the average Ollivier-Ricci curvature of a specific network. The network is constructed using the correlations of the closing stock prices, obtaining a Minimum Spanning Tree (MST) via a specific distance function and adding back ``high-value links''. In this work we further study the analytical properties and empirical results of this indicator. In particular, we seek to verify whether or not this estimator satisfies the aforementioned properties (\ref{first}-\ref{third}).

The Ollivier-Ricci ($\ORicci$) curvature of the associated network of \cite{RicciFinance} (based on \cite{MarketMining} and \cite{NetworkPerspective}) is called a ``crash hallmark''. In this document we argue that the proposed object is not an economic risk indicator but rather a good ex-post metric which can show the size and periods of a financial crisis. More specifically, we will show that the $\ORicci$ curvature of the constructed network does not indicate tendencies towards a crisis but rather accurately identifies the size and length of the crisis. 

The difference between being a crash hallmark or an economic risk indicator is fundamental in nature. The former allows us to understand the past and further improves our understanding of historical data, while the latter sends a fragility signal to the market agents prior to a potential crash. Only the indicator can be in a predictive manner, but this should not undermine the interest on the hallmark as a mathematical tool which helps us to characterize and understand crashes. \\
Analysis of the stock market using correlation networks was proposed in \cite{NetworkPerspective}. Since then, there has been a lot of interest in obtaining knowledge of the stock market from this network. In \cite{RicciFinance} we see the first appearance of curvature as an indicator for fragility. Intuitively, the deep connections between Ricci curvature bounds and entropy, recently discovered by the Lott-Villani-Sturm program (see \cite{VillaniOldAndNew}, \cite{McCann}), justify using curvature as a measure of fragility, see \cite{RicciFinance}. \\ \cite{ManyCurvaturesFinance} presents a comprehensive list of possible indicators of fragility using different definitions of curvature. We focus on analyzing analytical and numerical properties of the indicator proposed in \cite{RicciFinance}.
\section{Precise description of the algorithm} \label{AlgorithmSection}
In this section we present the algorithm for the hallmark proposed and used in \cite{RicciFinance} and \cite{ManyCurvaturesFinance}. We denote the hallmark for a given network by $\ORicci^{Net}$, which will depend on the size of the time period chosen $T$ and the edge threshold $\xi$ (see below).

Intuitively, one uses a function of correlations of data in the period of size $T$ to obtain a schematic sub-graph of the market. After a MST is chosen to be the representation of the market, edges with high correlation are added to the graph. Then, one computes the average $\ORicci$ curvature of the sub-graph using the \textit{hop distance} as the cost function (and not the weighted distance). 

\subsection{The main algorithm}\label{MainAlgorithm} 
\begin{algorithm}[H] 
To obtain the fragility indicator $(\ORicci)^{Net}$.
\begin{algorithmic}
\State Input: $T, \xi, \texttt{startDate}, \texttt{endDate}$
\For{$k \in \{1, \dots,  \texttt{endDate} - \texttt{startDate} - T \}$}
    \State Compute correlations matrix $\rho_{i,j}$ between stocks in period $[\texttt{startDate}, \texttt{startDate} + T]$.
    \State Compute cost function $D_{i,j} = \sqrt{2(1-\rho_{i,j})}$.
    \State Obtain via Prim's algorithm $G:=$ MinimumSpanningTree for the graph of all stocks with all edges, using as weights $D_{i,j}$.
    \State Add edge $xy$ to $G$ if $\rho_{x,y} \geq \xi$. 
    \State For every edge $ab$ compute the Ollivier-Ricci Curvature via \begin{equation} \label{OllivierRicci}
        \kappa(a,b) = 1 - \frac{W_1(\mu_a,\mu_b)}{d(a,b)},
    \end{equation}
    where, for a neighbor $v$ of node $a$ (i.e. $v \in N_a$), \begin{equation} \label{MeasuresFormula}
        \mu_a(v) = \frac{C_{a,v}}{\displaystyle \sum_{w \in N_a} C_{a,w}}
    \end{equation}
    and $d(a,b)$ is the (unweighted) \textit{hop distance}, counting the minimum number of steps in the shortest path of the extended graph. 
   \State Compute the average $\ORicci^{G}$ by averaging over all edges in $G$.
\EndFor
\State Return the average curvature $\ORicci^{Net}$
\end{algorithmic}
\end{algorithm}

Intuitively, for an edge $xy$ we observe $k(x,y) < 0$ if the weights and edges for neighbors of $x$ are such that it seems like $x$ is pulled away from $y$. That is, $x$ and $y$ may be connected by an edge but the weights of neighbors of $x$ away from neighbors of $y$ are relatively high. \\
In equation \eqref{OllivierRicci}, $W_1$ denotes the Wasserstein distance between probability measures, defined via \begin{equation}
  W_1(\mu,\nu) =   \inf_{\substack{\gamma \in M_{n \times n}(\mathbb{R}) \\ \gamma \times \mathbf{1}^T = \mu \\ \mathbf{1}^{T} \times \gamma = \nu}} \left\{ \sum_{i,j} d(i,j) \gamma_{i,j} \right\},
\end{equation} where again $d(i,j)$ is the \textit{hop distance}, the minimum number of steps needed to traverse the graph from verrtex $i$ to vertex $j$. The Wasserstein distance can be computed with numerical packages, like \texttt{Python Optimal Transport (POT)}, which was used in the simulations in Section \ref{sectionResults}. \\

From Algorithm \ref{MainAlgorithm}, we can formulate several questions. 
\begin{enumerate}
    \item How well does the indicator measure crisis? 
    \item What happens if one does not add high-value links? 
    \item What is the $\xi$- elasticity? How does one choose $\xi$? 
    \item As $T$ grows does the time series regularize? 
    \item As $T$ decreases do we approach white noise? 
    \item What is the impact of using the MST? Are there other better subgraphs?
    \item What happens for different distance functions functions $D_{i,j}$? 
    \item Is the average the most efficient way to measure risk? 
    \item Is it optimal to use only the hop distance to obtain the MST structure or should the hop distance be used to compute the curvature too? 
\end{enumerate}
We aim to answer all these questions throughout this work.

\section{Model dependence on parameters}
\subsection{Theory on the curvature of MST} \label{MSTSection}
\textbf{Question:} If $(G,N,E,W)$ is weighted graph and $(MST(G),N,E',W\lvert_E)$ is it's minimum spanning tree, what is the relation between $\ORicci^{G}$ and $\ORicci^{MST(G)}$? \\
\textbf{Answer} At the moment there is no apparent relation between $\ORicci$ curvature of the MST and the $\ORicci$ curvature of the original graph. We can construct simple examples with opposite results.

\begin{example} \label{OneEdge}(One edge at the time) \\
As a  toy model, consider a simple graph with 4 vertices, we add a single edge and compute the $\ORicci$ curvature. Observe that the leftmost graph is a MST for the remaining graphs. Although the connectivity of the 3 graphs is significantly different, the MST does not capture that behaviour.

\begin{figure}[h]
    \centering
    \includegraphics[scale = 0.4]{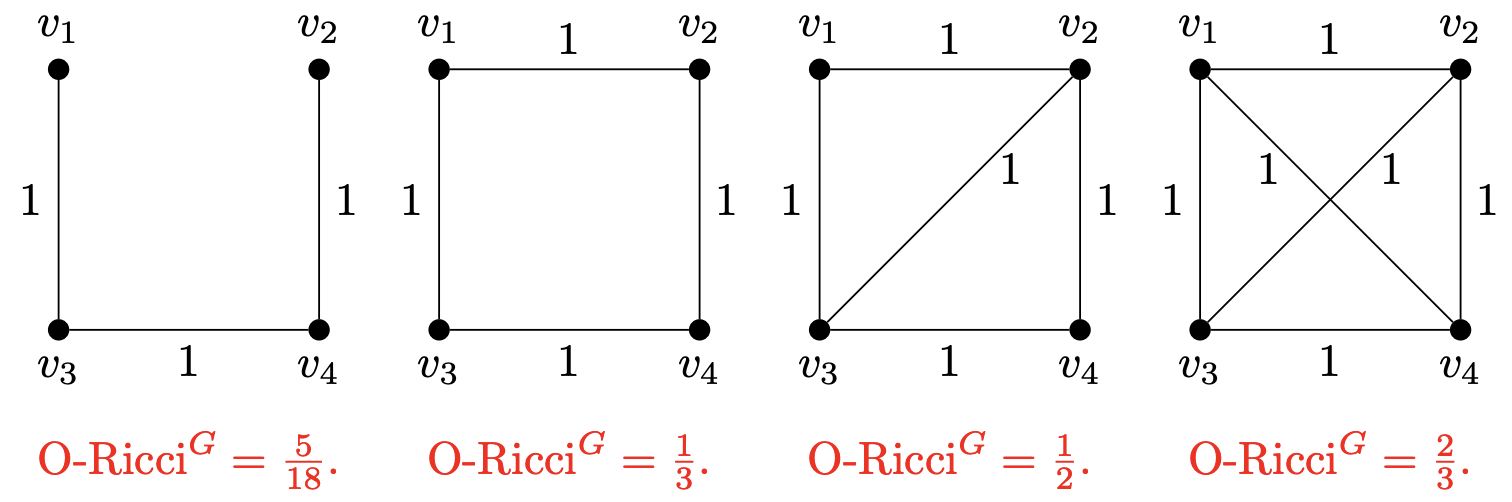}
    \caption{At each stage we add a single edge, every weight is the same, $w_{ij} = 1$, and we start from a connected graph with no cycles.}
    \label{fig:enter-label}
\end{figure}
\end{example}

\begin{example}(Very centralized node)
In the general case of a graph $G$ consisting of $n$ nodes where $n-1$ nodes are connected by one edge to one centralized node, one can immediately see that if $j \neq k \neq 1$, 
\begin{equation}
    k(v_1, v_j) = 1 - 1 = 0, \quad k(v_j,v_k) = 1.
\end{equation}
Thus there are $n-1$ edges with curvature $0$ and the remainder, of which there are $\frac{n(n-1)}{2} - (n-1)$, have curvature $1$. Therefore
\begin{equation}
    \ORicci^{G} = \frac{n-2}{n}.
\end{equation}

\begin{figure}[h]
    \centering
    \includegraphics[scale = 0.3]{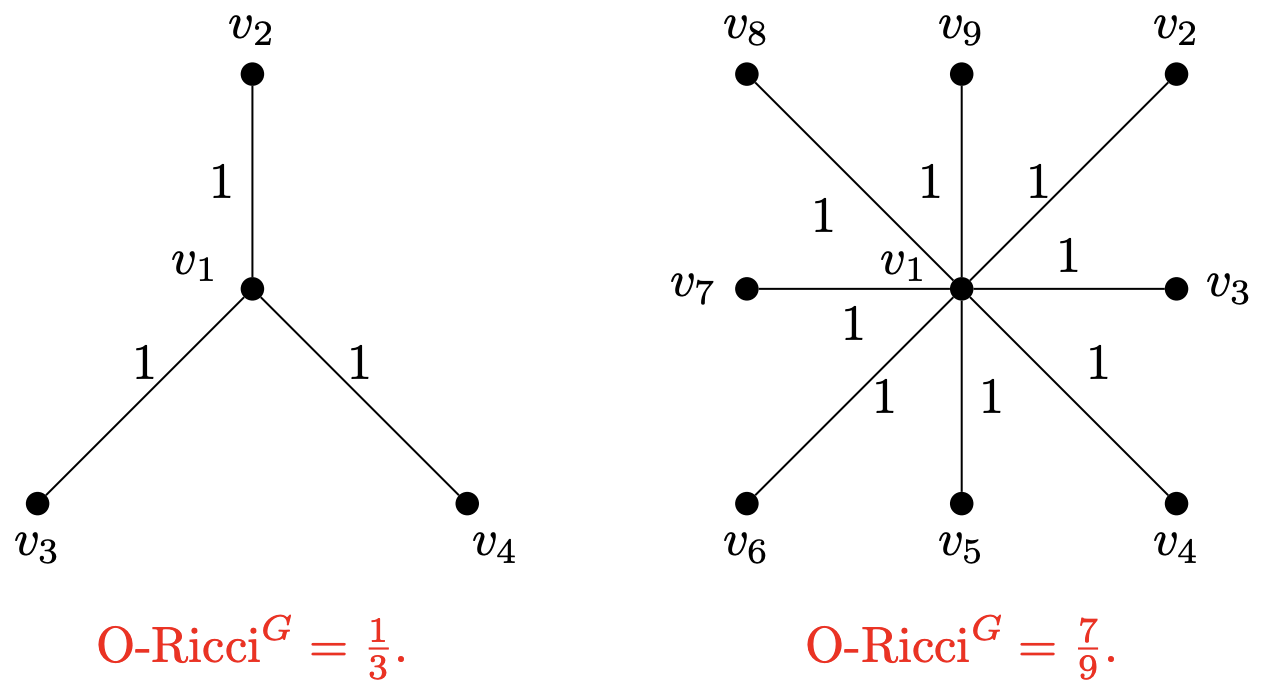}
    \caption{$v_1$ is a completely centralized node on which all others depend.}
    \label{fig:CentralizedNode}
\end{figure}
\end{example}

\begin{example}(Complete Graph) \label{example_completegraph}
In fact, we can compute the average $\ORicci$ curvature in the general case of a complete (fully connected) graph as follows. For $j \neq k$ we have $W_1(\mu_j, \mu_k) = \frac{1}{n-1}$, so $k(v_j, v_k) = 1 - \frac{1}{n-1} = \frac{n-2}{n-1}$. The curvature of each edge is the same, so
\begin{equation}
    \ORicci^{G} = \frac{n-2}{n-1}.
\end{equation}
\end{example}

From Example \ref{OneEdge}, it becomes apparent that the $\ORicci$ curvature of the Minimum Spanning Tree (MST) and the original graph are not directly correlated. Consequently, in Algorithm \ref{MainAlgorithm}, the decision to measure fragility using the MST represents a significant choice. Moreover, it remains unclear whether the edges removed during the MST selection process significantly contribute to fragility. To address this uncertainty, we consider two potential alternatives:
\begin{enumerate}
\item Computing the $\ORicci$ of the complete graph $G$ with the weights $D_{i,j}$, though this approach is computationally very costly.
\item Opting for a different sub-graph $G' \subset G$ instead of the MST that better captures curvature inheritance (see Section \ref{ORGSS}).
\end{enumerate}

\subsection{Adding or removing an edge}

Consider an undirected weighted graph $G := (N,E,W)$, where $N$ is the set of nodes, $E$ the set of edges and $W = \{ w_{e}: e \in E\}$ is the set of weights for all edges. Assume that for $x,y \in N$, $xy \not \in E$, and let us define $G^* = (N, E^*, W^*)$ where $E^* = E \cup \{xy\}$ and $W^* = W \cup \{w_{x,y}\}$. In other words, $G^*$ the graph $G$ modified by adding the edge $e := xy$ joining $x$ and $y$ with weight $w_{x,y}$. 

Let $d$ denote the Hop distance in $G$ and $d^*$ denote the Hop distance in $G^*$. Observing that adding an edge can only decrease the number of steps between nodes, for every $a\neq b \in N$ it follows that
\begin{equation} \label{distancesbound}
    d^*(a,b) \leq d(a,b).
\end{equation}
Keeping the notation of Algorithm \ref{OllivierRicci}, for every $a\neq b\in N$ we define the measures $\mu_a(\cdot)$ and $\mu_b(\cdot)$ by
\begin{equation} \label{measuresFormulas}
    \mu_{a}(b) = \frac{w_{a,b}}{\displaystyle \sum_{v \in N_a} w_{a,c}}, \: \:  \mu^*_{a}(b) = \frac{w^*_{a,b}}{\displaystyle \sum_{v \in N^*_a} w^*_{a,c}},
\end{equation}
where $N_a$ is the set of neighbors of $a$ in $G$ and $N^*_a$ is the set of neighbors of $a$ in $G^*$. The difference between $N_a$ and $N_a^*$ relies only on the condition $a \in \{x,y\}$, as 
\begin{equation}
   N_a^* =  \begin{cases}
    N_a, & \text{ if } a \not \in \{x,y\}, \\
    N_a \cup \{x \}, & \text{ if } a = y, \\
    N_a \cup \{y \}, & \text{ if } a = x.
    \end{cases}
\end{equation}
As a result, the measures in \eqref{measuresFormulas} only change if $a \in \{x,y\}$, or equivalently $\mu_a = \mu^*_a$ if $ a \not \in \{x,y\}$. Even if the measure $\mu^*_a$ remains unchanged from $\mu_a$, the Wasserstein distance $W^{d^*}$ may change from $W^d$ as the distance function has changed.

Denote by $\ORicci^{G}$ (respectively $\ORicci^{G^*}$) the Ollivier-Ricci curvature computed with the induced Hop distance $d$ on $G$ (respectively $d^*$ on $G^*$), as described in Section \ref{AlgorithmSection}.\\

Our first result, Proposition \ref{UpperBoundCurvature}, measures the total change in edge-curvature in terms of the relative error between $d$ and $d^*$. As we add a single edge to the graph, the only control on the curvature depends on how much the hop distances change.

\begin{proposition} \label{UpperBoundCurvature} (Bound on single-edge curvature change) \\
For any nodes $a,b \in N$, the difference in the $\ORicci$ curvature $k(a,b)$ of $G$ and $k^*(a,b)$ of $G^*$ is bounded, with upper bound
 \begin{align} \label{FirstBoundCurvature}
   & k^*(a,b) - k(a,b) \leq \frac{W^{d}_1(\mu_a,\mu_b) - W^{d^*}_1(\mu^*_a,\mu^*_b)}{d^*(a,b)}.
\end{align}
Moreover, if $a,b \not \in \{x,y\}$ and there exists a $\pi^*$ minimizing $W^{d^*}(\mu^*_a,\mu^*_b)$, then 
\begin{align}
   &k^*(a,b) - k(a,b) \leq \frac{\lvert \lvert d - d^* \rvert \rvert_{\infty,N}}{d^*(a,b)}, \label{secondBound}
\end{align}
where 
\begin{equation}
    \lvert \lvert d(a,b) - d^*(a,b) \rvert \rvert_{\infty,N} = \max_{(u,v) \in N^2} \lvert d(u,v) - d^*(u,v) \rvert,
\end{equation}
and $\mu_a,\mu_b,\mu_a^*,\mu_b^*$ are as in \eqref{measuresFormulas}. 
\end{proposition}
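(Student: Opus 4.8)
The plan is to expand the definition \eqref{OllivierRicci} of the curvature for both graphs and then exploit the monotonicity \eqref{distancesbound} of the hop distance under edge addition. Since the constant $1$ cancels,
\begin{equation}
  k^*(a,b) - k(a,b) = \frac{W^{d}_1(\mu_a,\mu_b)}{d(a,b)} - \frac{W^{d^*}_1(\mu^*_a,\mu^*_b)}{d^*(a,b)},
\end{equation}
and because $W^{d}_1(\mu_a,\mu_b)\ge 0$ while $d^*(a,b)\le d(a,b)$ forces $\tfrac{1}{d(a,b)}\le\tfrac{1}{d^*(a,b)}$, I can replace $d(a,b)$ by $d^*(a,b)$ in the first term to get $k^*(a,b) - k(a,b) \le \tfrac{W^{d}_1(\mu_a,\mu_b)}{d^*(a,b)} - \tfrac{W^{d^*}_1(\mu^*_a,\mu^*_b)}{d^*(a,b)}$, which is exactly \eqref{FirstBoundCurvature} once the two fractions are placed over the common denominator $d^*(a,b)$.

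For \eqref{secondBound} I would start from \eqref{FirstBoundCurvature} and use the hypothesis $a,b\notin\{x,y\}$: as recorded in the discussion preceding the proposition, this gives $\mu_a=\mu^*_a$ and $\mu_b=\mu^*_b$, so the numerator becomes $W^{d}_1(\mu_a,\mu_b) - W^{d^*}_1(\mu_a,\mu_b)$ --- only the cost function, not the marginals, has moved. The key point is that the admissible set in the definition of $W_1$ (couplings $\gamma$ with marginals $\mu_a$ and $\mu_b$) does not depend on the cost, so an optimal plan $\pi^*$ for $W^{d^*}_1(\mu_a,\mu_b)$ (whose existence is assumed in the statement, and is in any case automatic in this finite setting) is a feasible, generally suboptimal, plan for $W^{d}_1(\mu_a,\mu_b)$. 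Therefore
\begin{align*}
  W^{d}_1(\mu_a,\mu_b) - W^{d^*}_1(\mu_a,\mu_b)
    &\le \sum_{i,j}\bigl(d(i,j)-d^*(i,j)\bigr)\,\pi^*_{i,j} \\
    &\le \|d-d^*\|_{\infty,N}\sum_{i,j}\pi^*_{i,j}
     = \|d-d^*\|_{\infty,N},
\end{align*}
the last equality holding because $\pi^*$ is a coupling of probability measures. Dividing by $d^*(a,b)$ and feeding this into \eqref{FirstBoundCurvature} yields \eqref{secondBound}.

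I do not expect a genuine obstacle here: the first inequality is a one-line rearrangement of the curvature formula, and the second is the standard ``evaluate a suboptimal coupling against the other cost'' estimate. The only points needing care are (i) justifying $\mu_a=\mu^*_a$ and $\mu_b=\mu^*_b$ under $a,b\notin\{x,y\}$, which is already done in the text; (ii) checking that the very same $\pi^*$ is admissible for both transport problems --- precisely why the existence of a minimizing $\pi^*$ is invoked as a hypothesis; and (iii) keeping the standing assumption $a\neq b$ so that $d^*(a,b)>0$ and all quotients are well defined. A secondary subtlety worth a sentence is that the refinement \eqref{secondBound} genuinely needs $a,b\notin\{x,y\}$: if $a$ or $b$ is an endpoint of the new edge then $\mu_a\neq\mu^*_a$, and the numerator of \eqref{FirstBoundCurvature} is no longer controlled by $\|d-d^*\|_{\infty,N}$ alone.
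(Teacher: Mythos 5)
Your proof is correct and follows essentially the same route as the paper: the first bound uses $d^*(a,b)\le d(a,b)$ together with $W^d_1\ge 0$ (you apply it as $1/d\le 1/d^*$ on the first term, the paper after combining into one fraction, which is the same estimate), and the second bound is exactly the paper's argument of evaluating the $W^{d^*}$-optimal coupling $\pi^*$ as a feasible plan for the cost $d$ after noting $\mu_a=\mu_a^*$, $\mu_b=\mu_b^*$ when $a,b\notin\{x,y\}$. Your added remarks on finiteness guaranteeing existence of $\pi^*$ and on $a\ne b$ ensuring $d^*(a,b)>0$ are sound but not needed beyond what the paper assumes.
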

\begin{proof}
Directly from the definition of $k(a,b)$ in \eqref{OllivierRicci},
\begin{align}
    k^*(a,b) - k(a,b) & = \frac{W_1^d(\mu_a,\mu_b)}{d(a,b)} - \frac{W^{d^*}(\mu^*_a,\mu^*_b)}{d^*(a,b)} \nonumber \\
    & = \frac{d^*(a,b)W^{d}_1(\mu_a,\mu_b) - d(a,b)W^{d^*}_1(\mu^*_a,\mu^*_b)}{d(a,b)d^*(a,b)}. \label{SecondCurv}
\end{align}
Plugging the inequality \eqref{distancesbound} into \eqref{SecondCurv} yields \eqref{FirstBoundCurvature}. To obtain inequality \eqref{secondBound}, note that if $\pi^*$ is a minimizer for $W^{d^*}(\mu_a,\mu_b)$, then 
\begin{align}
    &W^d(\mu_a,\mu_b) - W^{d^*}(\mu_a,\mu_b) \nonumber \\
    &= \inf_{\pi \in \Gamma(\mu_a,\mu_b)} \int d(x,y) d\pi(x,y) - \int d^*(x,y) d\pi^*(x,y) \nonumber \\
    & \leq \int d(x,y) - d^*(x,y) d\pi^*(x,y) \nonumber \\
    & \leq \lvert \lvert d - d^* \rvert \rvert_{\infty,N}. \label{dinftyestimate}
\end{align}
Inequality \eqref{secondBound} follows from plugging in \eqref{dinftyestimate} into \eqref{FirstBoundCurvature}.
\end{proof}

Unless one applies more conditions to the graphs $G$ and $G^*$, we expect the above inequalities to be sharp, based on the following example. 

\begin{example}[Inequalities \eqref{FirstBoundCurvature} and \eqref{secondBound} are attained for every vertex pair.] 
Let $G^*$ denote a complete graph of $n$ vertices $\{v_1, \dots, v_n\}$ with all edge weights equal to $1$, and let $G$ be the graph created by removing the edge $\{v_1 v_2\}$ from $G^*$. In this case, a quick computation shows that the estimates \eqref{FirstBoundCurvature} and \eqref{secondBound} are attained for every possible vertex pair. \\

Recalling Example \ref{example_completegraph}, we know that each edge of the graph $G^*$ has the same curvature $k^*(v_i,v_j) = \frac{n-2}{n-1}$ and hence this is $\ORicci^{G^*}$. In particular, $W^{d^*}_1(\mu_i, \mu_j) = \frac{1}{n-1}$. Then, it is easy to see that $W_1(\mu_1, \mu_2) = 0$ and hence $k(v_2,v_1) = 1$. Since $G^*$ is complete, $d^*(v_1,v_2) = 1$. We can now easily verify that both sides of the inequality (\ref{FirstBoundCurvature}) are $-\frac{1}{n-1}$. \\
After removing one edge from the complete graph $G^*$, the curvature of all unaffected vertices is unchanged,
\begin{equation}
    k(v_j, v_k) = k^*(v_j, v_k) = \frac{n-2}{n-1}, \quad j \neq k \text{ and } j,k \neq 1,2,
\end{equation}
and there are $\frac{(n-2)(n-3)}{2}$ such edges. In this case both sides of both inequalities (\ref{FirstBoundCurvature}) and (\ref{secondBound}) are $0$. \\
It remains to compute the curvature $k(v_j, v_1)$ for $j \geqslant 3$ and $k(v_k, v_2)$ for $k \geqslant 3$ (we already computed above that $k(v_2, v_1) = 1$ as $W_1(\mu_1, \mu_2) = 0$). There are $2(n-2)$ such cases and by symmetry they are all identical, so without loss of generality consider $k(v_3, v_1)$. To compute $W_1(\mu_3,\mu_1)$, we distinguish two possibilities: sending mass from $\delta_2$ to the vertices $v_3, \dots, v_n$ equally which has cost $\frac{1}{n-1}$, and sending all $\frac{1}{n-1}$ mass from $v_1$ to $v_3$ which also has cost $\frac{1}{n-1}$. Thus $W_1^d(\mu_3,\mu_1) = \frac{2}{n-1}$, and hence $k(v_3, v_1) = \frac{n-3}{n-1}$. Since clearly $d^*(v_3,v_1) = 1$, the both sides of the inequality (\ref{FirstBoundCurvature}) are $\frac{1}{n-1}$. \\ 
Now, we can compute
\begin{equation}
    \ORicci^G = \frac{n^2(n-3)+1}{n(n-1)}.
\end{equation}
Compare this with example \ref{example_completegraph}, where $\ORicci^{G^*} = \frac{n-2}{n-1}$.
\end{example}

Our next proposition explores the relationship between the curvature and the connectivity of the graph via the degrees of the affected nodes. In what follows, let $n_a = \#N(a)$ denote the degree of a node $a$, and recall that by $\{xy\}$ we specifically denote the edge added to the graph $G$ to obtain the graph $G^*$ (and by $x, y$ we denote the affected nodes). We first establish the following lemma, which features the key idea (see Figure \ref{fig:map}). 

\begin{lemma}(Change in affected nodes) \\
For the affected nodes $x,y$, we have
\begin{equation} \label{boundmus}
    W^{d}(\mu_x, \mu_x^*) \leq \frac{1}{n_x+1}.
\end{equation}    
\end{lemma}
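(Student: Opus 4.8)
The plan is to exhibit an explicit transport plan from $\mu_x$ to $\mu^*_x$ and read off its cost. First I would put the two measures in closed form. Writing $S_x := \sum_{v\in N_x} w_{x,v}$ and $S^*_x := S_x + w_{x,y}$, one has $\mu_x(v) = w_{x,v}/S_x$ for $v\in N_x$, while $\mu^*_x(v) = w_{x,v}/S^*_x$ for $v\in N_x$ and $\mu^*_x(y) = w_{x,y}/S^*_x$; equivalently,
\begin{equation}
\mu^*_x \;=\; \frac{S_x}{S^*_x}\,\mu_x \;+\; \frac{w_{x,y}}{S^*_x}\,\delta_y ,
\end{equation}
so $\mu^*_x$ is simply $\mu_x$ with its mass on $N_x$ contracted by the factor $S_x/S^*_x\le 1$ and the freed mass $w_{x,y}/S^*_x$ deposited at $y$. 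Adjoining the edge $xy$ removes no vertex or edge, so $d(v,y)<\infty$ for all $v\in N_x$ and the coupling below is admissible.

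Second, I would take the coupling $\pi$ that leaves as much mass fixed as possible: set $\pi(v,v)=\mu^*_x(v)=w_{x,v}/S^*_x$ for $v\in N_x$ (admissible because $\mu^*_x(v)\le\mu_x(v)$) and route the surplus $\mu_x(v)-\mu^*_x(v)=w_{x,v}w_{x,y}/(S_xS^*_x)$ from $v$ to $y$. Checking the two marginals is a one-line computation, and since the diagonal part costs nothing,
\begin{equation}
W^{d}_1(\mu_x,\mu^*_x)\;\le\;\sum_{v\in N_x}\frac{w_{x,v}w_{x,y}}{S_xS^*_x}\,d(v,y)\;=\;\frac{w_{x,y}}{S^*_x}\sum_{v\in N_x}\mu_x(v)\,d(v,y).
\end{equation}
This is the same as the convexity estimate $W^{d}_1(\mu_x,\mu^*_x)\le \tfrac{w_{x,y}}{S^*_x}\,W^{d}_1(\mu_x,\delta_y)$ read off from the decomposition above.

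Third, I would bound the two factors. The mass factor satisfies $w_{x,y}/S^*_x\le 1/(n_x+1)$ as soon as $w_{x,y}$ is at most the average incident weight $S_x/n_x$ at $x$, which is the regime of Algorithm \ref{MainAlgorithm}, where $xy$ is reinstated only for a large correlation $\rho_{x,y}$, hence a small weight $D_{x,y}$. For the distance factor one uses the local picture of Figure \ref{fig:map}: each neighbour $v$ of $x$ carrying surplus mass is one hop from $y$, so $\sum_{v\in N_x}\mu_x(v)\,d(v,y)\le 1$. Multiplying the two bounds gives $W^{d}_1(\mu_x,\mu^*_x)\le 1/(n_x+1)$, which is \eqref{boundmus}; the estimate for $\mu_y$ follows by exchanging $x$ and $y$.

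I expect the distance factor to be the only genuine obstacle. The coupling and the mass bookkeeping are routine, but $\sum_{v\in N_x}\mu_x(v)\,d(v,y)$ is in general controlled only by $\max_{v\in N_x}d(v,y)$, and forcing this quantity down to $1$ is precisely where one must use the structure of the network produced by the algorithm — concretely, the configuration of Figure \ref{fig:map} in which every neighbour of $x$ lies in the one-hop ball around $y$. Without such a hypothesis the right-hand side of \eqref{boundmus} acquires the extra factor $\max_{v\in N_x}d(v,y)$.
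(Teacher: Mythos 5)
Your coupling is exactly the one the paper uses: keep the common mass in place and push the surplus created by the new neighbour onto $y$ (Figure \ref{fig:map}). The difference is that you keep general weights, while the paper's proof takes the measures to be uniform, $\mu_x=\frac{1}{n_x}\sum_{b\in N(x)}\delta_b$ and $\mu_x^*=\frac{1}{n_x+1}\bigl(\sum_{b\in N(x)}\delta_b+\delta_y\bigr)$, so the total transported mass is exactly $\frac{1}{n_x+1}$ with no side condition. In your weighted version the mass factor $w_{x,y}/S^*_x\le\frac{1}{n_x+1}$ holds only under the assumption $w_{x,y}\le S_x/n_x$, which is not a hypothesis of the lemma; the heuristic that a reinstated edge has high correlation and hence small weight $D_{x,y}$ gives no quantitative control, so as written this step proves a conditional statement rather than \eqref{boundmus}. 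If you specialize to unit weights, as the paper's proof implicitly does, this particular gap disappears.

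The second issue you flag, the distance factor $\sum_{v\in N_x}\mu_x(v)\,d(v,y)$, is the genuine one, and you are right that nothing in the hypotheses forces every neighbour of $x$ to lie one hop from $y$: in $G$ we have $xy\notin E$, so $d(x,y)\ge 2$ and $d(v,y)$ can be arbitrarily large (take $G$ a long path from $x$ to $y$, where $n_x=1$ and the coupling's cost is $\frac{1}{2}d(v,y)$). Be aware that the paper's own proof elides exactly this point: it charges each unit of surplus the cost $d(x,a)=1$ rather than the true cost $d(a,y)$ of moving it to $y$, so the honest output of this coupling is $\frac{1}{n_x(n_x+1)}\sum_{a\in N(x)}d(a,y)\le\frac{\max_{a\in N(x)}d(a,y)}{n_x+1}$, or $\frac{2}{n_x+1}$ if the cost is measured with $d^*$ (route $a\to x\to y$). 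So your diagnosis of where the difficulty lies is accurate, but your proposal does not close it: without the extra adjacency assumption the bound \eqref{boundmus} is not established by your argument (and, with the metric $d$ of $G$ as stated, it fails in general), so what you have is a proof of the weaker bound carrying the factor $\max_{v\in N_x}d(v,y)$ that you yourself identify.
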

\begin{proof}
By the definition \eqref{measuresFormulas},
\begin{equation}
    \mu_x = \frac{1}{n_x}\sum_{b \in N(x)} \delta_b, \hspace{0.5cm} \mu_x^* = \frac{1}{n_x+1} (\sum_{b \in N(x)}\delta_b + \delta_y),
\end{equation}
so the map which keeps $\frac{1}{n_x+1}$ in each vertex different than $y$ and transports $\frac{1}{n_x} - \frac{1}{n_x+1}$ onto $y$ is an admissible map, see Figure \ref{fig:map}. Hence,
\begin{equation}
     W^{d}(\mu_x, \mu_x^*) \leq \frac{1}{n_x(n_x+1)} \sum_{a \in N(x)} d(x,a) = \frac{n_x}{n_x(n_x+1)} = \frac{1}{n_x+1}.
\end{equation}
\end{proof}

\begin{figure}[H]
    \centering
    \includegraphics[scale = 0.4]{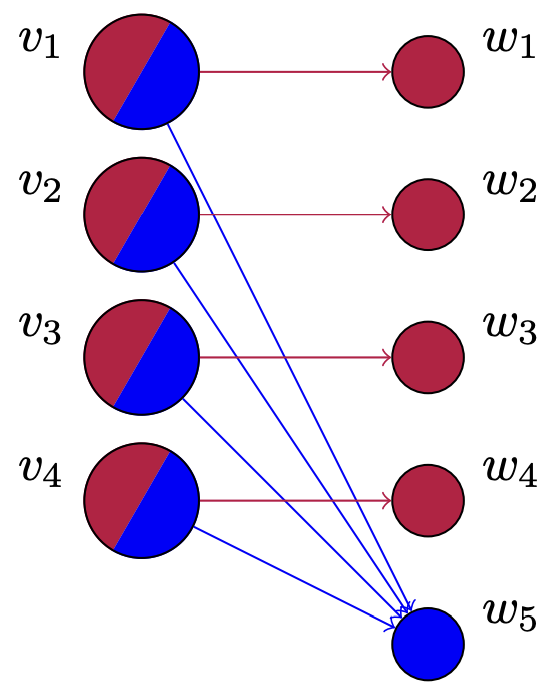}
    \caption{The admissible transport map between $\mu_x$ and $\mu_x^*$. When adding an edge connecting $x$ to another node $y$ we can keep all of the previous mass and transport the excess to the new neighbor $y$ of $x$.}
    \label{fig:map}
\end{figure}

Since we do not take the graphs $G$ and $G^*$ to be directed, the argument for $\mu_x$ and $\mu_x^*$ is also valid for $\mu_y$ and $\mu_y^*$.

\begin{proposition} \label{firstbound}(Bound on curvature in terms of degrees) \\
For the added edge $xy$, we have 
\begin{align} 
    &k^*(x,y) - k(x,y) \nonumber \\
    &\leq \frac{1}{d^*(x,y)} \left( W^d(\mu_x^*,\mu_y^*) - W^{d^*}(\mu_x^*,\mu_y^*) + \frac{1}{n_x+1} + \frac{1}{n_y+1} \right). 
\end{align}
Furthermore, if there exists a minimizer $\pi^*$ for $W^{d^*}$, then
\begin{equation} \label{BoundWithDegrees}
    k^*(x,y) - k(x,y) \leq  \frac{1}{d^*(x,y)} \left(  \lvert \lvert d-d^* \rvert \rvert_{\infty,N} + \frac{1}{n_x+1} + \frac{1}{n_y+1}\right).
\end{equation}
\end{proposition}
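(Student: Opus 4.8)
The plan is to combine the single-edge curvature bound of Proposition \ref{UpperBoundCurvature} with the mass-displacement estimate \eqref{boundmus} of the preceding Lemma, using the triangle inequality for the Wasserstein metric $W^d$ to interpolate between the ``old'' measures $\mu_x,\mu_y$ and the ``new'' measures $\mu_x^*,\mu_y^*$. Since $x$ and $y$ are the affected nodes, we cannot invoke \eqref{secondBound} directly (that inequality assumed $a,b\notin\{x,y\}$, where the measures are unchanged); instead the degrees $n_x,n_y$ enter precisely to pay for the change from $\mu_x$ to $\mu_x^*$ and from $\mu_y$ to $\mu_y^*$.

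First I would apply inequality \eqref{FirstBoundCurvature} of Proposition \ref{UpperBoundCurvature} with $a=x$, $b=y$, which gives
\[
 k^*(x,y)-k(x,y)\ \le\ \frac{W^{d}_1(\mu_x,\mu_y)-W^{d^*}_1(\mu^*_x,\mu^*_y)}{d^*(x,y)} ,
\]
so it remains to bound the numerator from above. I would insert the intermediate quantity $W^{d}_1(\mu_x^*,\mu_y^*)$ and split the numerator as $\bigl(W^{d}_1(\mu_x,\mu_y)-W^{d}_1(\mu_x^*,\mu_y^*)\bigr)+\bigl(W^{d}_1(\mu_x^*,\mu_y^*)-W^{d^*}_1(\mu_x^*,\mu_y^*)\bigr)$. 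For the first bracket, since $W^{d}_1$ is a genuine metric on probability measures (the cost $d$ being held fixed), the triangle inequality yields $W^{d}_1(\mu_x,\mu_y)\le W^{d}_1(\mu_x,\mu_x^*)+W^{d}_1(\mu_x^*,\mu_y^*)+W^{d}_1(\mu_y^*,\mu_y)$, hence
\[
 W^{d}_1(\mu_x,\mu_y)-W^{d}_1(\mu_x^*,\mu_y^*)\ \le\ W^{d}_1(\mu_x,\mu_x^*)+W^{d}_1(\mu_y,\mu_y^*)\ \le\ \frac{1}{n_x+1}+\frac{1}{n_y+1},
\]
the last step being exactly \eqref{boundmus} applied at $x$ and (by the symmetry noted after the Lemma) at $y$. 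Collecting terms and dividing by $d^*(x,y)$ gives the first displayed inequality of the Proposition.

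For the refinement \eqref{BoundWithDegrees}, I would estimate the remaining term $W^{d}_1(\mu_x^*,\mu_y^*)-W^{d^*}_1(\mu_x^*,\mu_y^*)$ exactly as in \eqref{dinftyestimate}: if $\pi^*$ is a minimizer for $W^{d^*}_1(\mu_x^*,\mu_y^*)$, then $\pi^*$ is still an admissible coupling of $\mu_x^*$ and $\mu_y^*$ for the cost $d$, so $W^{d}_1(\mu_x^*,\mu_y^*)\le\int d\,d\pi^*$ and therefore $W^{d}_1(\mu_x^*,\mu_y^*)-W^{d^*}_1(\mu_x^*,\mu_y^*)\le\int (d-d^*)\,d\pi^*\le\lvert\lvert d-d^*\rvert\rvert_{\infty,N}$, where \eqref{distancesbound} guarantees the integrand is nonnegative so that the sup-norm bound is legitimate. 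Substituting this into the inequality from the previous paragraph produces \eqref{BoundWithDegrees}.

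The step I expect to require the most care is not a hard estimate but the bookkeeping: one must keep the Wasserstein cost fixed (all three terms $W^{d}_1$) when applying the triangle inequality — mixing $W^{d}_1$ and $W^{d^*}_1$ there would be invalid since they are metrics for different ground costs — and pass from $d$ to $d^*$ exactly once, through the single coupling $\pi^*$, which is the only place \eqref{distancesbound} is used. One should also double-check that $\mu_x,\mu_x^*$ (and $\mu_y,\mu_y^*$) are compared on the common ground space $N$ with the $G$-hop distance $d$, which is well defined even though $xy\notin E$; this is what makes \eqref{boundmus} applicable. Everything else is a direct substitution of the Lemma and of Proposition \ref{UpperBoundCurvature}.
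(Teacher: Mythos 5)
Your proposal is correct and follows essentially the same route as the paper: the paper's (terser) proof likewise uses the $W^d$-triangle inequality through $\mu_x^*,\mu_y^*$, the preceding lemma to bound $W^d(\mu_x,\mu_x^*)$ and $W^d(\mu_y,\mu_y^*)$ by $\tfrac{1}{n_x+1}$ and $\tfrac{1}{n_y+1}$, and then the minimizer $\pi^*$ argument from Proposition \ref{UpperBoundCurvature} to replace $W^d(\mu_x^*,\mu_y^*)-W^{d^*}(\mu_x^*,\mu_y^*)$ by $\lvert\lvert d-d^*\rvert\rvert_{\infty,N}$. Your added remarks (why \eqref{secondBound} is unavailable for $a,b\in\{x,y\}$, and keeping the ground cost fixed in the triangle inequality) are accurate clarifications of steps the paper leaves implicit.
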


\begin{proof}
By the triangle inequality, 
\begin{equation}
    W^d(\mu_x,\mu_y) \leq W^{d}(\mu_x,\mu_x^{*}) + W^{d}(\mu_x^{*},\mu_y^{*}) + W^{d}(\mu_y^{*},\mu_y)
\end{equation}
The result then follows from the previous lemma. The second statement is obtained as in the proof of Proposition \ref{UpperBoundCurvature}.
\end{proof}

Note that because $d^*(x,y) \geq 1$, we immediately get a more familiar type of bound: 
\begin{equation} \label{BoundWithDegreesAndOne}
    k^*(x,y) - k(x,y) \leq  \lvert \lvert d-d^* \rvert \rvert_{\infty,N} + \frac{1}{n_x+1} + \frac{1}{n_y+1}.
\end{equation}
Equation \eqref{BoundWithDegreesAndOne} shows a direct connection between connectivity and curvature, via the right hand-side which involves the degrees of $x$ and $y$.

\subsection{Time parameter and smoothing}
In practice, one does not want to wait too long to compute good indicators of the dynamical properties of the market. However, for an indicator to demonstrate relative consistency, it must exhibit clearer trends with increased data availability.
\begin{figure}[H] 
\begin{subfigure}{0.48\textwidth}
\includegraphics[width=\linewidth]{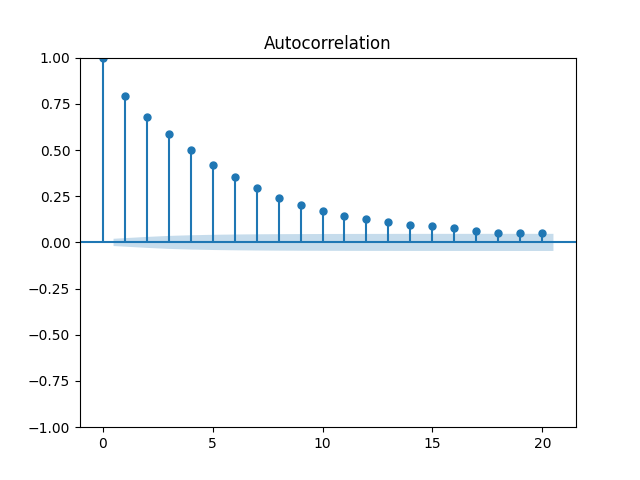}
\caption{$T = 22$} \label{fig:a}
\end{subfigure}\hspace*{\fill}
\begin{subfigure}{0.48\textwidth}
\includegraphics[width=\linewidth]{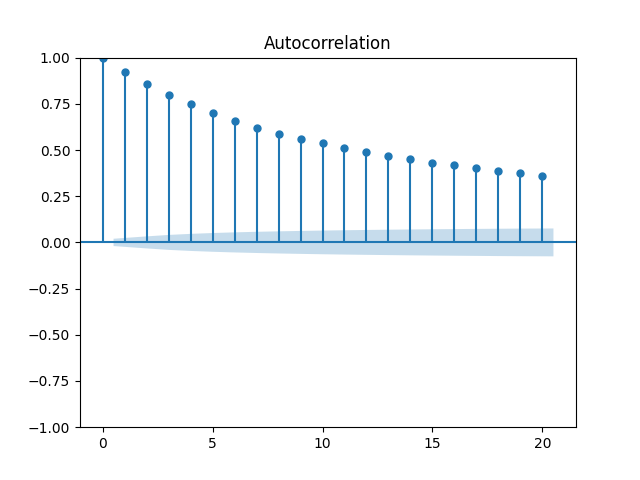}
\caption{$T = 132$} \label{fig:b}
\end{subfigure}

\medskip
\begin{subfigure}{0.48\textwidth}
\includegraphics[width=\linewidth]{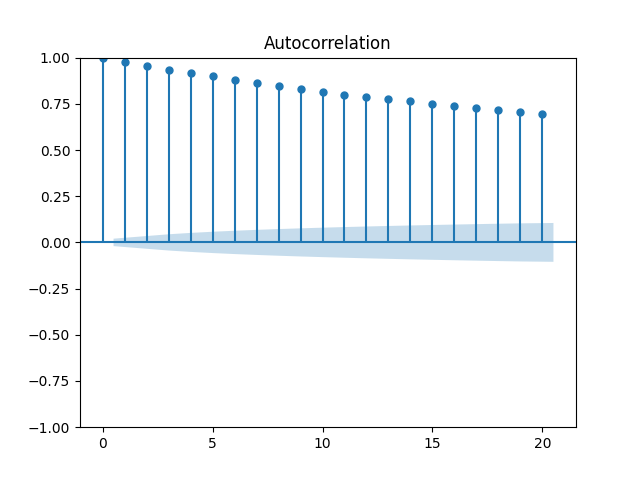}
\caption{$T = 500$} \label{fig:c}
\end{subfigure}\hspace*{\fill}
\begin{subfigure}{0.48\textwidth}
\includegraphics[width=\linewidth]{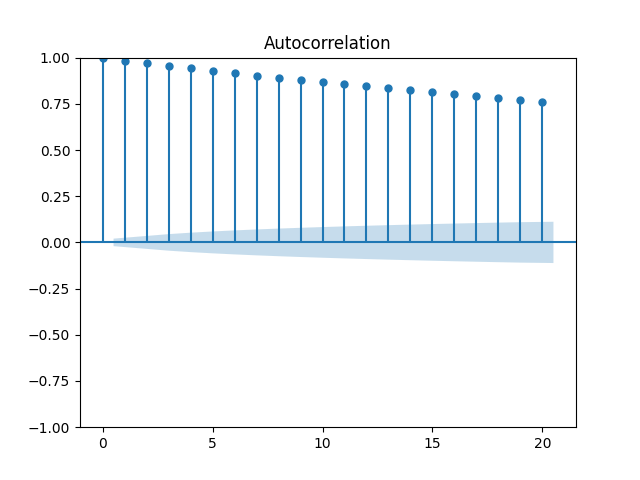}
\caption{$T = 700$} \label{fig:d}
\end{subfigure}

\medskip
\begin{subfigure}{0.48\textwidth}
\includegraphics[width=\linewidth]{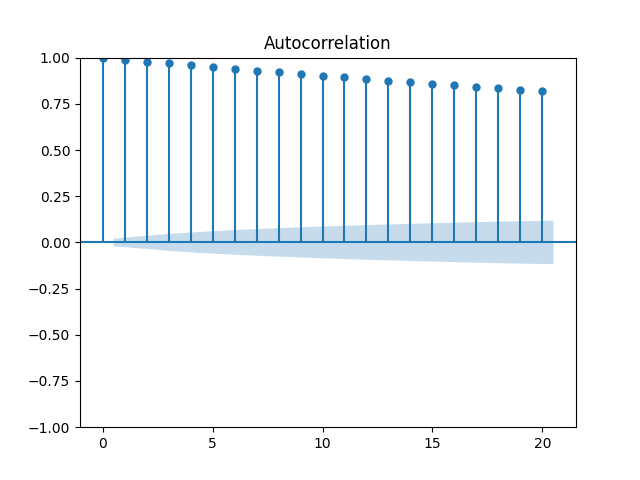}
\caption{$T = 1000$} \label{fig:e}
\end{subfigure}\hspace*{\fill}
\begin{subfigure}{0.48\textwidth}
\includegraphics[width=\linewidth]{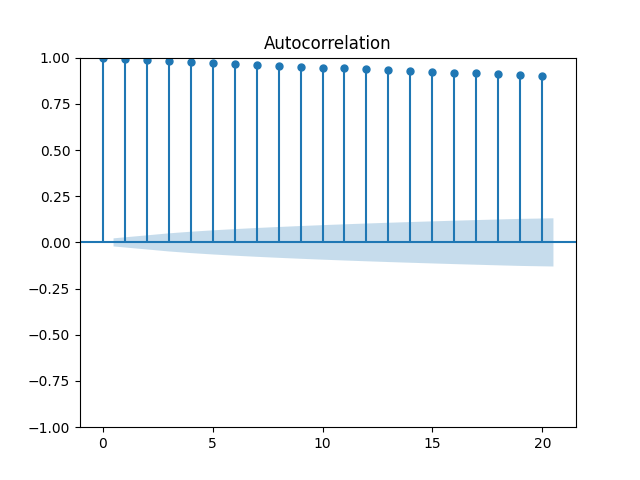}
\caption{$T = 2000$} \label{fig:f}
\end{subfigure}
\caption{Behaviour of auto-correlation in large $T$.} \label{fig:LargeT}
\end{figure}

In Figure \ref{fig:LargeT} we observe that (as expected) the indicator shows a stronger auto-correlation as $T$ increases. As a measure of economical ex-post analysis, higher values of $T$ are preferred.

\subsubsection{Statistical Test: Significance vs White Noise} \label{TSection}
We now show via auto-correlation plots that in the $ T \to 0$ limit the indicator behaves increasingly as white noise. 
\begin{figure}[H] 
\begin{subfigure}{0.48\textwidth}
\includegraphics[width=\linewidth]{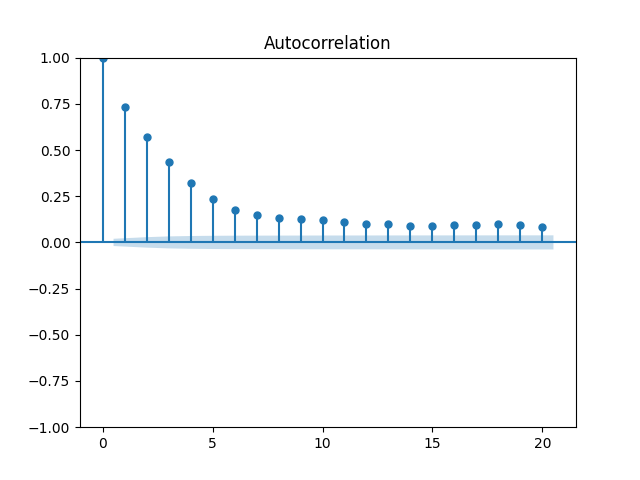}
\caption{$T = 5$} \label{fig:a2}
\end{subfigure}\hspace*{\fill}
\begin{subfigure}{0.48\textwidth}
\includegraphics[width=\linewidth]{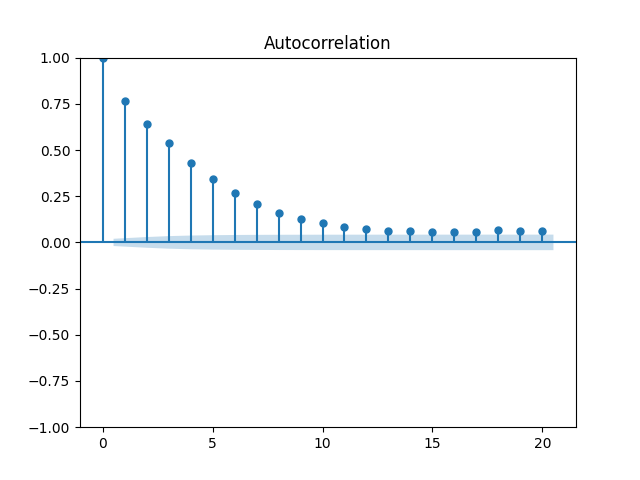}
\caption{$T = 10$} \label{fig:b2}
\end{subfigure}

\medskip
\begin{subfigure}{0.48\textwidth}
\includegraphics[width=\linewidth]{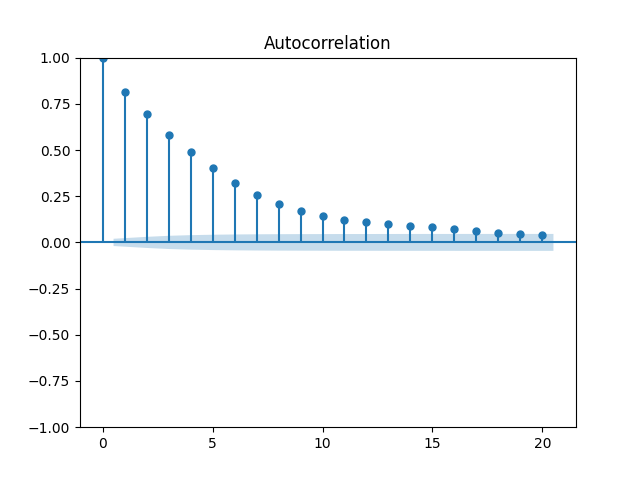}
\caption{$T =20$} \label{fig:c2}
\end{subfigure}\hspace*{\fill}
\begin{subfigure}{0.48\textwidth}
\includegraphics[width=\linewidth]{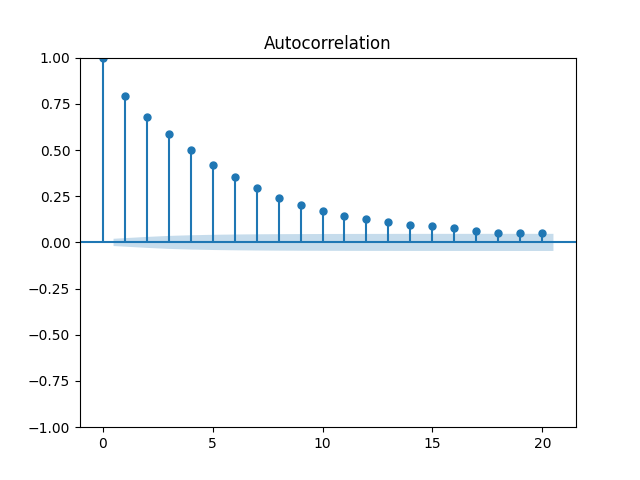}
\caption{$T = 22$} \label{fig:d2}
\end{subfigure}
\caption{Smaller values of $T$} \label{fig:SmallTFig}
\end{figure}

From Figure \ref{fig:SmallTFig} it is clear that taking small values of $T$ for the computation of $\ORicci^{Net}$ may result in short-term mistakes in trends. These mistakes are fundamental when trying to use $\ORicci^{Net}$ as a measure of fragility.

\subsection{Different metric to obtain MST and the choice of the ultra metric} \label{MetricsSection}
In \cite{StabilityOfMST} it was shown that the MST obtained from $D_{i,j} = \rho_{i,j}$ is relatively stable with respect to changes of $T$ for financial networks constructed as in Algorithm \ref{AlgorithmSection}. \\ 

In this section we plot different distance functions for a randomly chosen subset of the data to compare the impact on the MST and $\ORicci$. As theorized in \cite{StabilityOfMST}, similar distances did not present significant changes in their plots, except when considering  logarithms. Hence, we think it is reasonable to use $\sqrt{2(1-C_{i,j})}$ as it is commonly used in literature (where it is sometimes referred to as an  ``ultrametric'').  

\begin{figure}[H] 
\centering
\textbf{Plots for different functions of $1-\rho_{i,j}$}
\begin{subfigure}{0.48\textwidth}
\includegraphics[width=\linewidth]{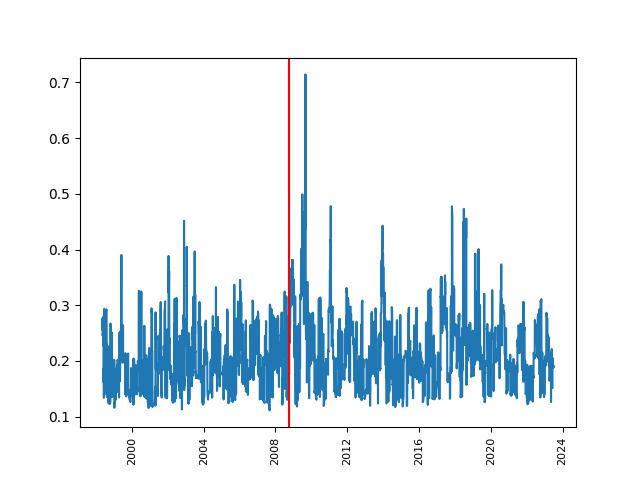}
\caption{$D_{i,j} = (2(1-\rho_{i,j}))^{1/4}$} \label{fig:acosts}
\end{subfigure}\hspace*{\fill}
\begin{subfigure}{0.48\textwidth}
\includegraphics[width=\linewidth]{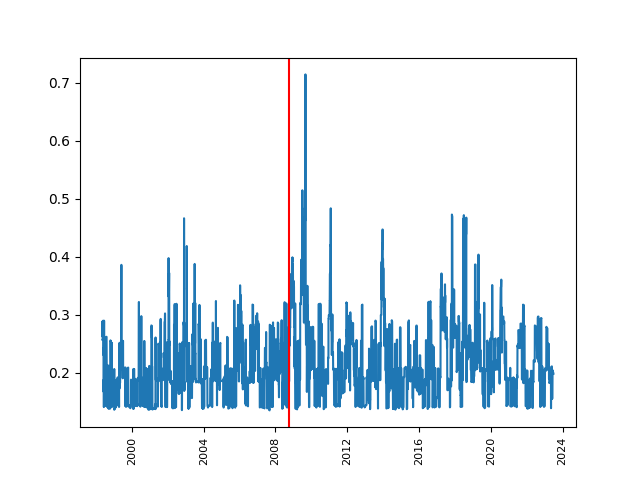}
\caption{$D_{i,j} = (2(1-\rho_{i,j}))^{1/16}$} \label{fig:bcosts}
\end{subfigure}

\medskip
\begin{subfigure}{0.48\textwidth}
\includegraphics[width=\linewidth]{10MockFourthRoot.png}
\caption{$D_{i,j} = (2(1-\rho_{i,j}))^{4}$} \label{fig:ccosts}
\end{subfigure}\hspace*{\fill}
\begin{subfigure}{0.48\textwidth}
\includegraphics[width=\linewidth]{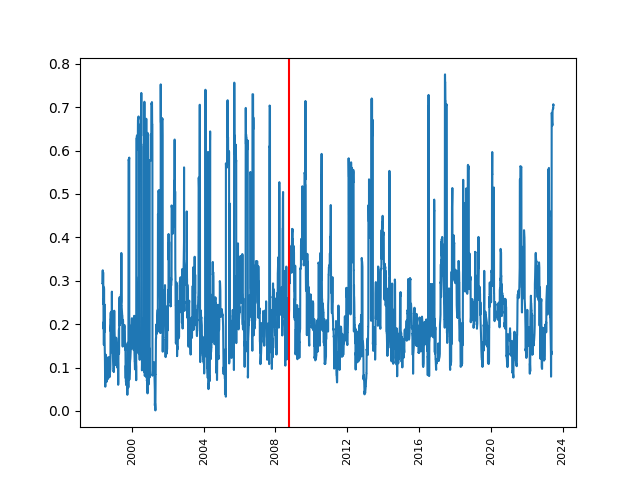}
\caption{ $D_{i,j} = \log(1 +(2(1-\rho_{i,j}))$} \label{fig:dcosts}
\end{subfigure}
\caption{Average $\ORicci$ curvature for different cost functions in the same (random) sub-graph of the 1997-2024 period with $T = 132$ and $\xi = 0.85$. The red vertical line corresponds to the crash-day of September 29 2008.}
\end{figure}

In Figures \ref{fig:acosts}, \ref{fig:bcosts}, \ref{fig:ccosts}, and \ref{fig:dcosts} we observe that for simple functions of the form $ D_{i,j} = (2(1-\rho_{i,j}))^p$ for some power $p$, the behaviour of the indicator can be similar. Nevertheless, a bad choice of cost function can result in noise, as exemplified by $\log(1+2(1-\rho))$. Notice that $h(a)= \log(1 + a)$ is concave in $[1,\infty)$ but Figure \ref{fig:dcosts} does not show the behaviour of the previous choices $h(a) = a^p$ (in Figures \ref{fig:acosts}, \ref{fig:bcosts}, and \ref{fig:ccosts}). The study of general properties of functions $h$ such that $h(1-\rho)$ is a reasonable indicator is an interesting future line of research.

\begin{example}(Weight independence) \label{WeightIndependence} \\
Consider the following network: 
\begin{figure}[h]
    \centering
    \includegraphics[scale = 0.4]{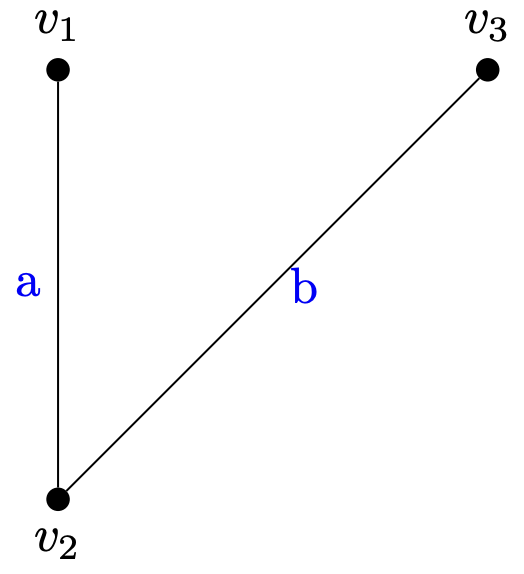}
    \caption{Independently of the weights $a$ and $b$ the associated $\ORicci$ curvature is always $\frac{1}{3}$.}
    \label{fig:AlwaysOneThird}
\end{figure}

We show that independently of the weights $a,b$, $\ORicci^{G} = 1/3.$ Note that $d(v_1,v_2) = 1, d(v_2,v_3) = 1, d(v_1,v_3) = 2$, and \begin{align}
    & \mu_{v_1} = \delta_{v_2}, \label{muv1always} \\
    & \mu_{v_2} = \frac{a}{a+b} \delta_{v_1} + \frac{b}{a+b} \delta_{v_3}, \\
    & \mu_{v_3} = \delta_{v_2}. \label{muv3always}
\end{align}
Consequently by \eqref{muv1always} and \eqref{muv3always}, $W^d(\mu_{v_1},\mu_{v_3}) = 2$ and 
\begin{align*}
    & W^{d}(\mu_{v_1},\mu_{v_2}) = 1 \Rightarrow k(v_1,v_2) = 0, \\
    & W^{d}(\mu_{v_2},\mu_{v_3})  = 1 \Rightarrow k(v_2,v_3) = 0.
\end{align*}
Hence, 
\begin{equation*}
  \text{Avg.} \ORicci^{G} = \frac{1}{3} \left(1 + 0 + 0 \right) = \frac{1}{3}.
\end{equation*}
\end{example}

Example \ref{WeightIndependence} shows that using the average Ricci curvature can be justified as a way of getting rid of the dependence of the weights used to obtain the MST. Of course, this independence is not the rule but a motivation for the use of the average. Below is another such example.

\begin{example}(Independence again) \label{independenceagain} \\
Consider the following network.
\begin{figure}[h]
    \centering
    \includegraphics[scale = 0.8]{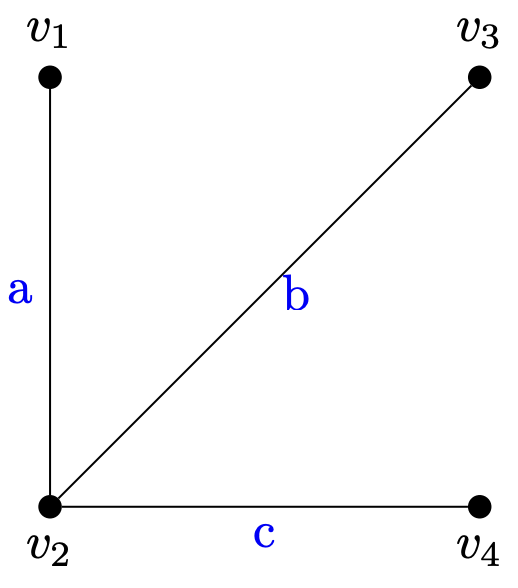}
    \caption{Independently of the weights $a,b,$ and $c$ the associated $\ORicci$ curvature is always $\frac{1}{3}$.}
    \label{fig:AlwaysIndep}
\end{figure}
The associated $\ORicci$ curvature is $1/3$, independently of the weights $a,b$ and $c$. The computations follow in the same manner as in Example \ref{WeightIndependence}, 
\begin{align*}
    & W^{d}(\mu_{v_1},\mu_{v_2}) = 1 \Rightarrow k(v_1,v_2) = 0, \\
    & W^{d}(\mu_{v_1},\mu_{v_3})  = 0 \Rightarrow k(v_2,v_3) = 1, \\
    & W^{d}(\mu_{v_1},\mu_{v_4})  = 0 \Rightarrow k(v_2,v_3) = 1, \\
    & W^{d}(\mu_{v_2},\mu_{v_3})  = 1 \Rightarrow k(v_2,v_3) = 0, \\
    & W^{d}(\mu_{v_2},\mu_{v_4})  = 1 \Rightarrow k(v_2,v_3) = 0, \\
    & W^{d}(\mu_{v_3},\mu_{v_4})  = 0 \Rightarrow k(v_2,v_3) = 1.
\end{align*}
Then,
\begin{equation}
    \ORicci^G = \frac{1}{6} \left( 0 + 1 + 1 + 0 + 0+ 1 \right) =  \frac{1}{3}.
\end{equation}
\end{example}

In Algorithm \ref{MainAlgorithm}, the weights $D_{i,j}$ are used to obtain the MST and to define the measures via \eqref{measuresFormulas}. An argument similar to those presented in \cite[Theorem 1]{WeightedTreeRicci} and \cite[Theorem 3]{WeightedTreeRicci} shows that the average Ricci curvature \textit{on a tree} is independent of the weights if we use the hop distance as the metric. This fact (except for the schematic construction of the tree) is an excellent motivation for the use of the \textit{average} $\ORicci$ curvature rather than a different statistic like the maximum or minimum Ricci curvature.

We conclude that the weight function $D_{i,j}$ should be chosen according to a prior knowledge of relative centralization of nodes. A promising line of investigation is to consider a Bayesian approach, adding prior information on node-centralization of a market to obtain better estimates for $\ORicci$.

\subsection{Adding high-value links} \label{XiSection}
In Algorithm \ref{MainAlgorithm}, we obtain the MST and then add back edges satisfying
\begin{equation} \label{ThresholdCondition}
     \rho_{i,j} \geq \xi.
\end{equation}
If $D_{i,j} = h(1-\rho_{i,j})$, where $h: \mathbb{R} \to \mathbb{R}$ is positive and continuous function with $h(0) = 0$, then \eqref{ThresholdCondition} adds the edges $ij$ whose corresponding weights $D_{i,j}$ are small. From this observation, we expect $\mu^*$ to be similar to $\mu$ due to the stability result of \cite{StabilityOfMST}. The exact nature of this similarity remains an open question. 

\section{Results} \label{sectionResults}
As in \cite{RicciFinance} we start by studying the American market. We compute the average $\ORicci$ curvature using historical closing prices for all companies in the S\&P 500, obtained via the \texttt{Yahoo Finance Python API}. The complete sample yields 388 companies for the 1997-2014 period.
\begin{figure}[H]
    \centering
    \textbf{$\ORicci^{Net}$ from 1998-2012.}\par
    \includegraphics[scale = 0.25]{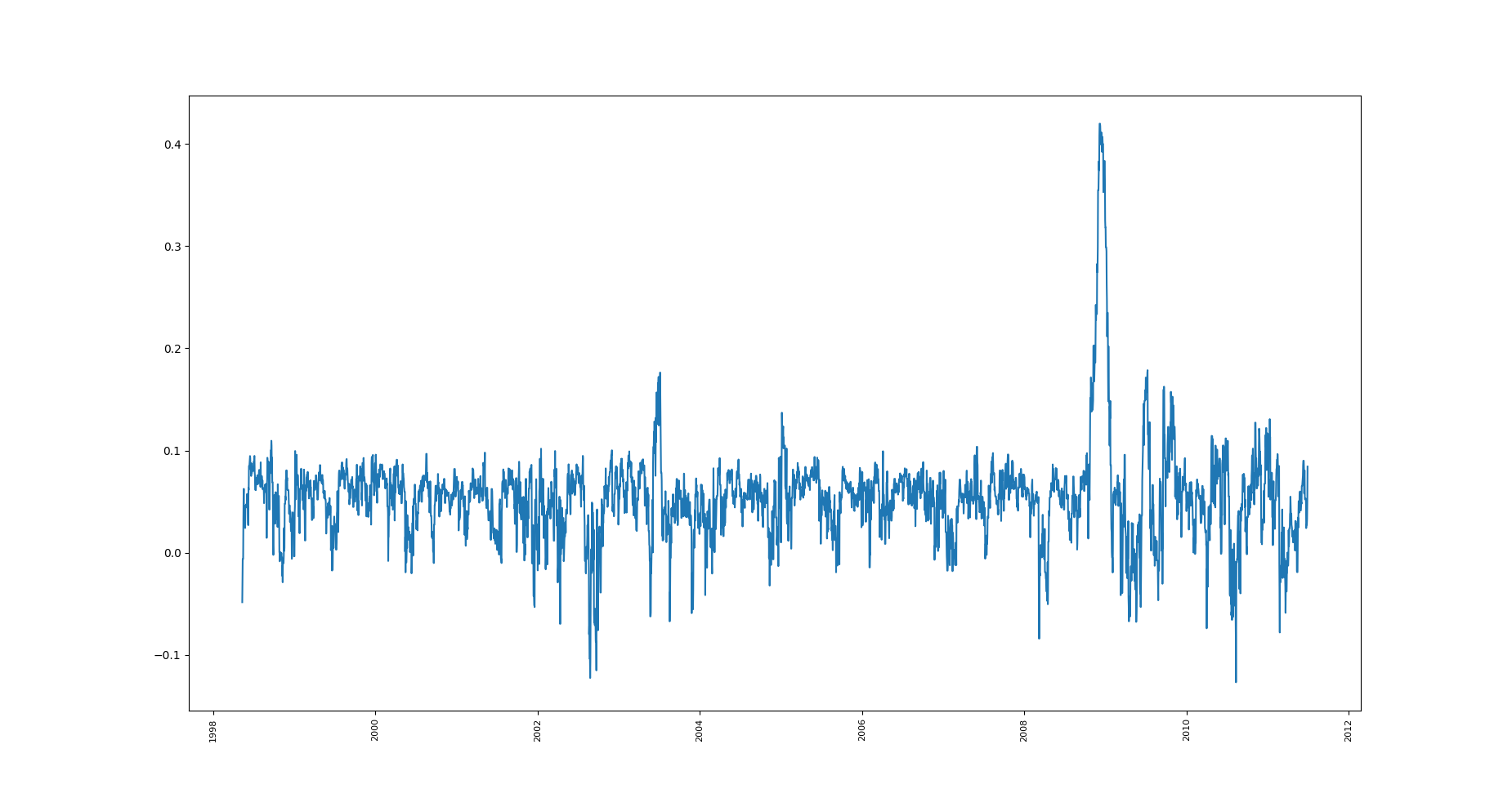}
    \caption{Value of the $\ORicci$ indicator in 1997-2012.}
    \label{fig:AllData2012}
\end{figure}

In Figure \ref{fig:AllData2012} we observe the value of the indicator from 1998-2012. We use all data available for S\&P listed companies, $T = 132$ and $\xi = 0.85$. We see a huge spike during September 2008, during the Lehman Brothers collapse.

In \cite{RicciFinance}, the algorithm of section \ref{MainAlgorithm} is presented as a measure of fragility in which the cases of most interest correspond to the biggest financial crisis (the 2008-2009 crisis being the most infamous). In Figure \ref{fig:AllData2012} we see the value of Algorithm \ref{MainAlgorithm} peak at the same time as the financial crisis of 2008-2009. In the next figure, we observe the behaviour during the 2008-2009 year. 

\begin{figure}[H]
    \centering
    \textbf{$\ORicci^{Net}$ during the financial crisis of 2008-2009 for $T = 132$ and $\xi= 0.85$.}\par
    \includegraphics[scale = 0.25]{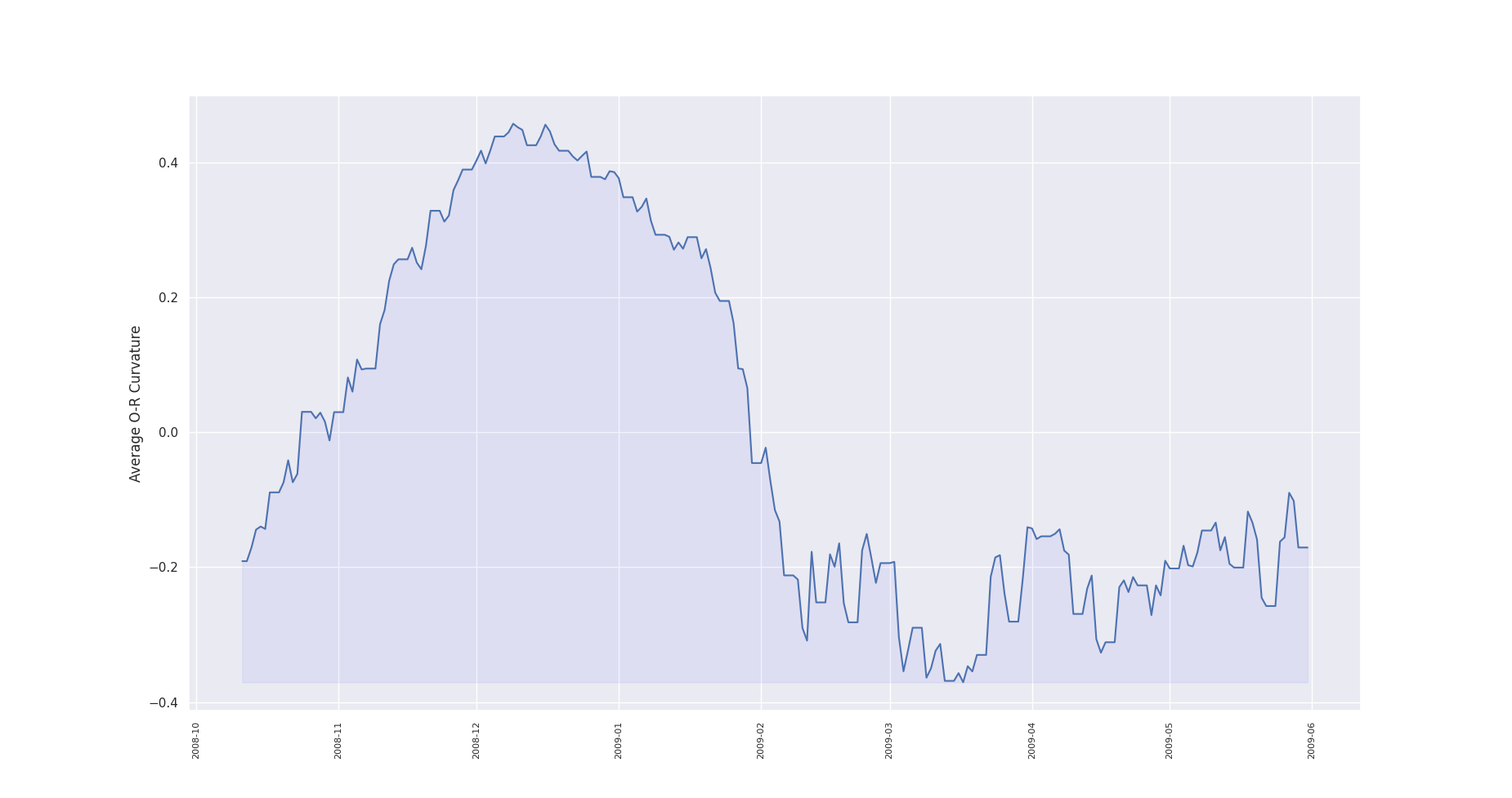}
    \caption{Value of the $\ORicci$ indicator in 2008-2009 (during the financial crisis) using all the companies in the S\&P 500, $T = 132$, and $\xi = 0.85$.}
    \label{fig:AllData20082009Xi132}
\end{figure}

Figure \ref{fig:AllData20082009Xi132} shows that the $\ORicci$ seems to have returned to it's usual values by February or March 2009. Nevertheless, the instability of the market seems to be different when we analyze with smaller values of $T$.


\begin{figure}[H]
    \centering
    \textbf{$\ORicci^{Net}$ during the financial crisis of 2008-2009 for $T = 22$ and $\xi= 0.85$.}\par
    \includegraphics[scale = 0.25]{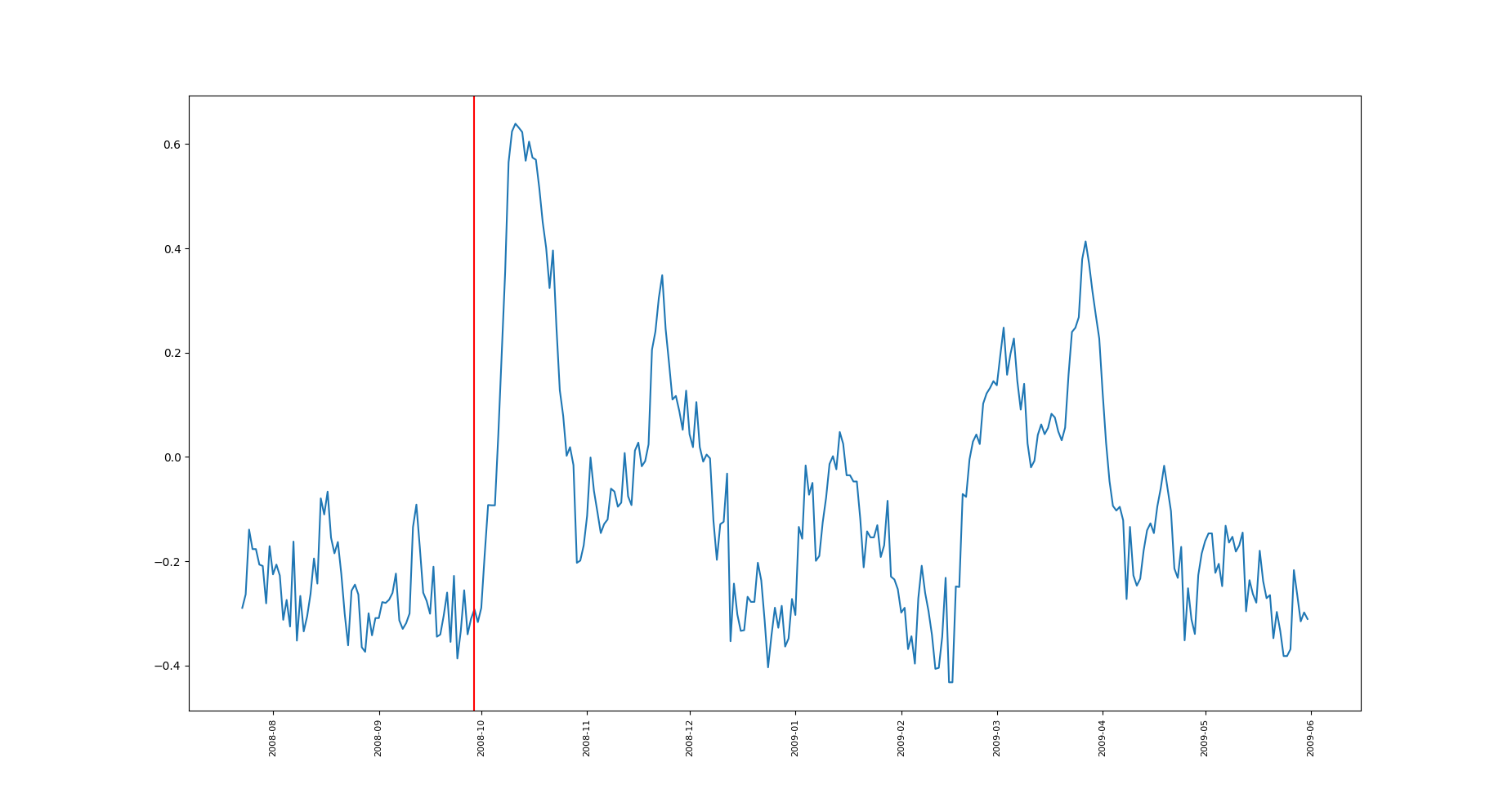}
    \caption{Value of the $\ORicci$ indicator in 2008-2009 (during the financial crisis) using all the companies in the S\&P 500, $T = 22$, and $\xi = 0.85$. In red we observe the crash day September 29, 2008.}
    \label{fig:AllData20082009}
\end{figure}

In Figure \ref{fig:AllData20082009}, we see the analysis of the year 2008-2009. The 2008 crisis corresponds to the period of instability, the variance and height of the average $\ORicci$ curvature did not return to it's pre-crisis values until April-May 2009.

\subsection{American market: Change of threshold}
In the works \cite{MarketMining} and \cite{NetworkPerspective} it was argued that adding high-value links tends to give a more accurate schematic picture of the actual state of the market. For the method of Algorithm \ref{MainAlgorithm} to be robust, we need to analyze the $\xi$-elasticity of the $\ORicci$ indicator. By Proposition \ref{UpperBoundCurvature} we can estimate the effect of adding a single edge. In particular, the edges associated to high-value links can only decrease the hop distance in the new graph (because any path of less steps remains in the graph). 

In the following figures we observe that as we increase $\xi$, the average curvature decreases. Although we see this behaviour on the general level of the plots, it is not true on a ``point-to-point'' basis. This observation, rooted in the fact that $\ORicci$ depends on the ratio of $W^{d^*}$ and $d^*$, indicates that an apparent trend in a small period of time is not reliable unless $\xi$ is somehow prescribed or fixed. Two experiments with similar values $\xi_1 \approx \xi_2 $ and $ \xi_1 \neq \xi_2$ can indicate short-term tendencies in opposite directions.

We start with an extreme value of $T= 2000$ and then we show that the same behaviour is observed in the cases $T = 22$ and $T=132$, which are much noisier by default.

\begin{figure}[H]
    \centering
    \includegraphics[scale = 0.25]{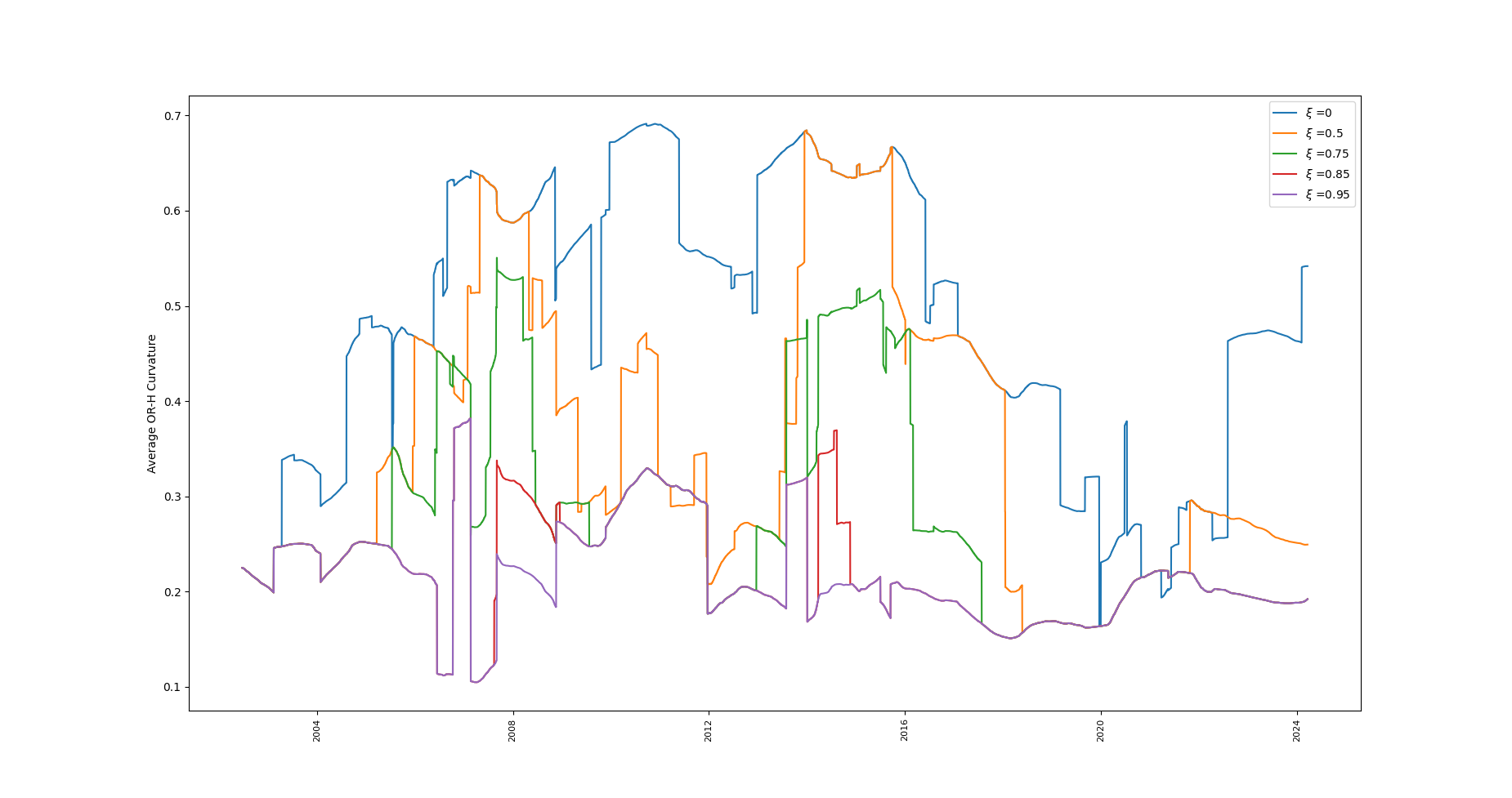}
    \caption{Changes on the method as $\xi$ varies, full time period 1997-2024, on a random sub-graph of 10 companies of the American market.}
    \label{fig:Xi1}
\end{figure}
\begin{figure}[H]
    \centering
    \includegraphics[scale = 0.25]{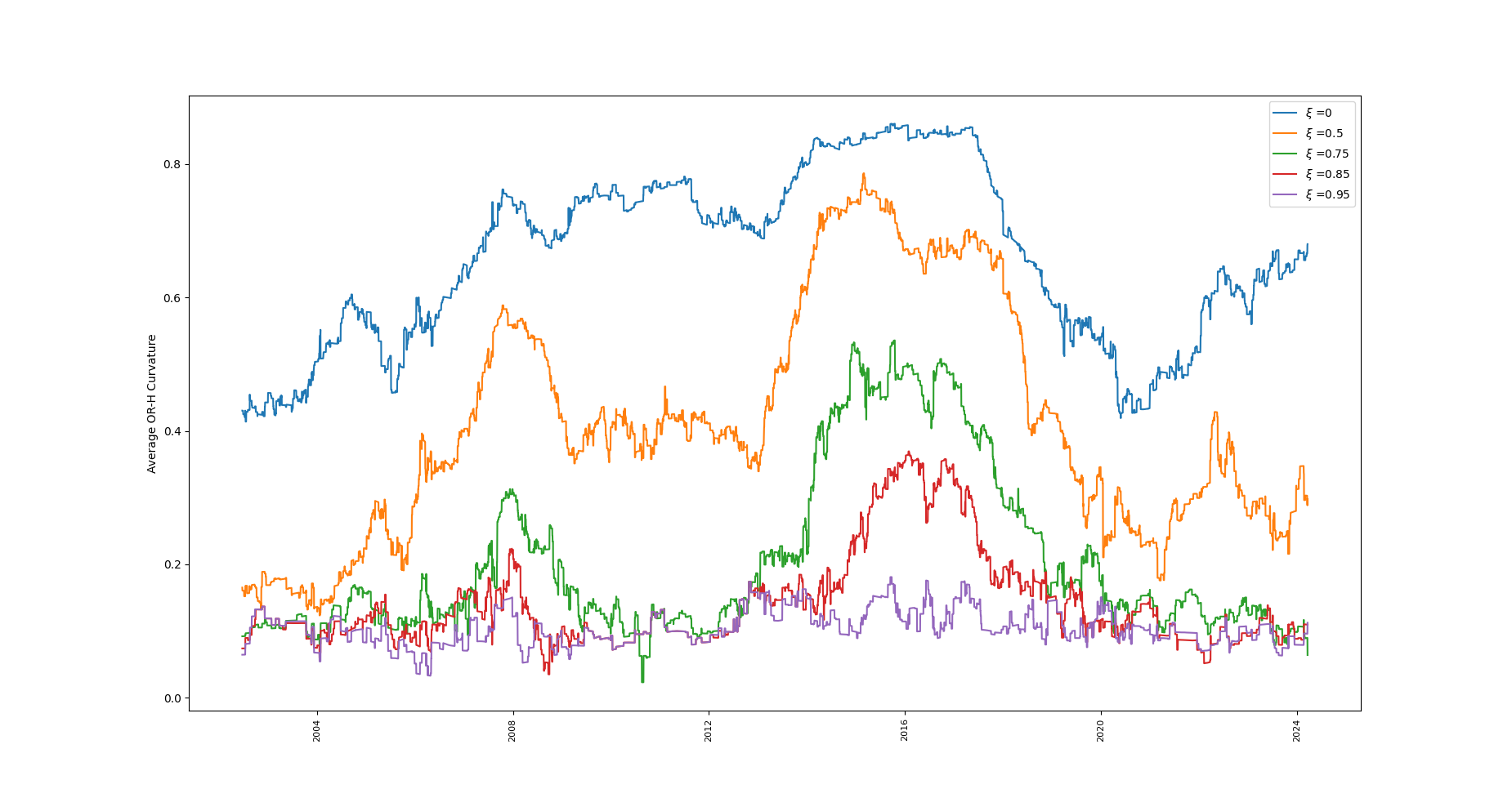}
    \caption{Changes on the method as $\xi$ varies, full time period 1997-2024, on a random sub-graph of 25 companies of the American market.}
    \label{fig:Xi3}
\end{figure}
\begin{figure}[H]
    \centering
    \includegraphics[scale = 0.25]{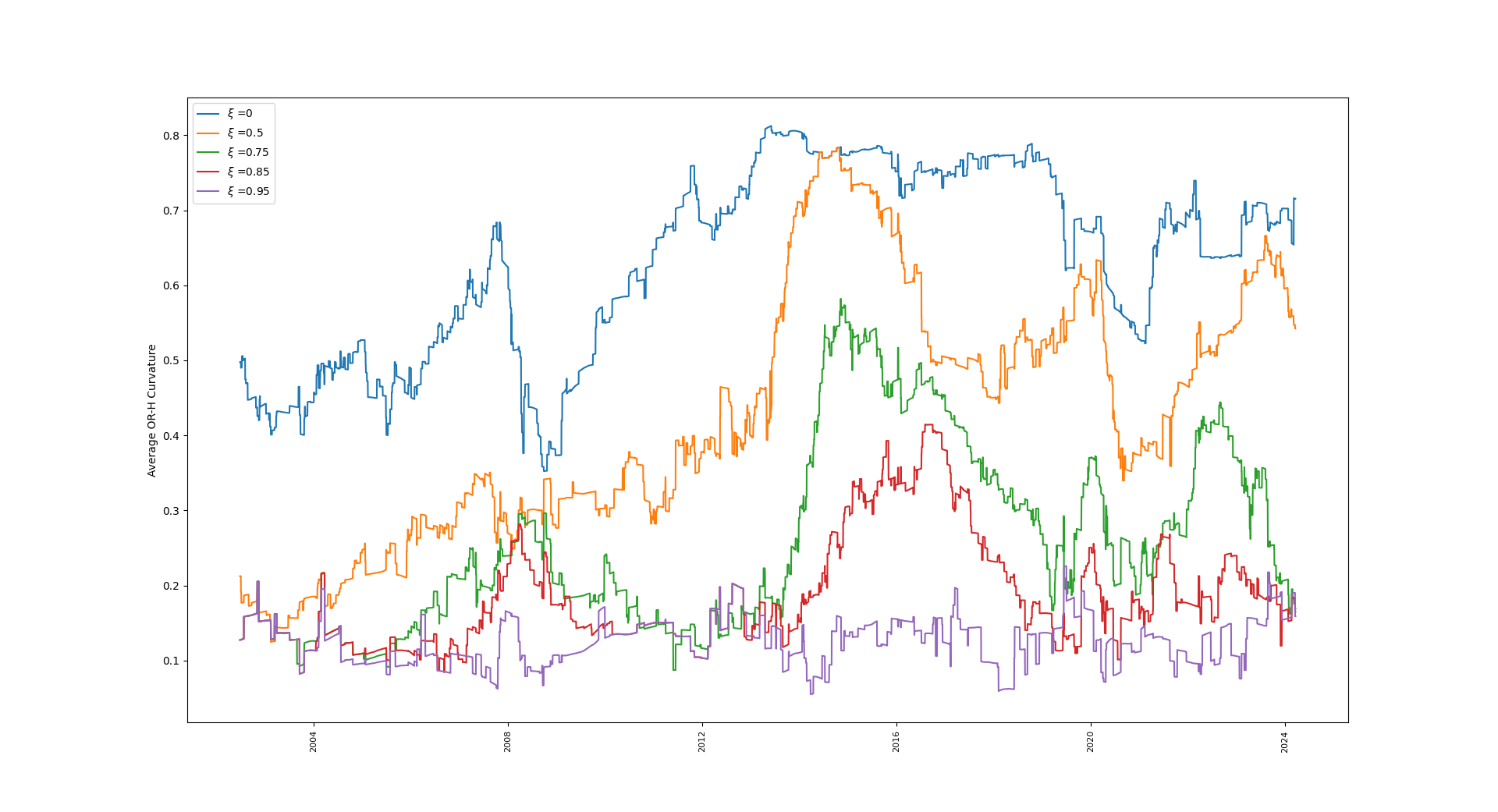}
    \caption{Changes on the method as $\xi$ varies, full time period 1997-2024, on a random sub-graph of 30 companies of the American market.}
    \label{fig:Xi2}
\end{figure}
\begin{figure}[H]
    \centering
    \includegraphics[scale = 0.25]{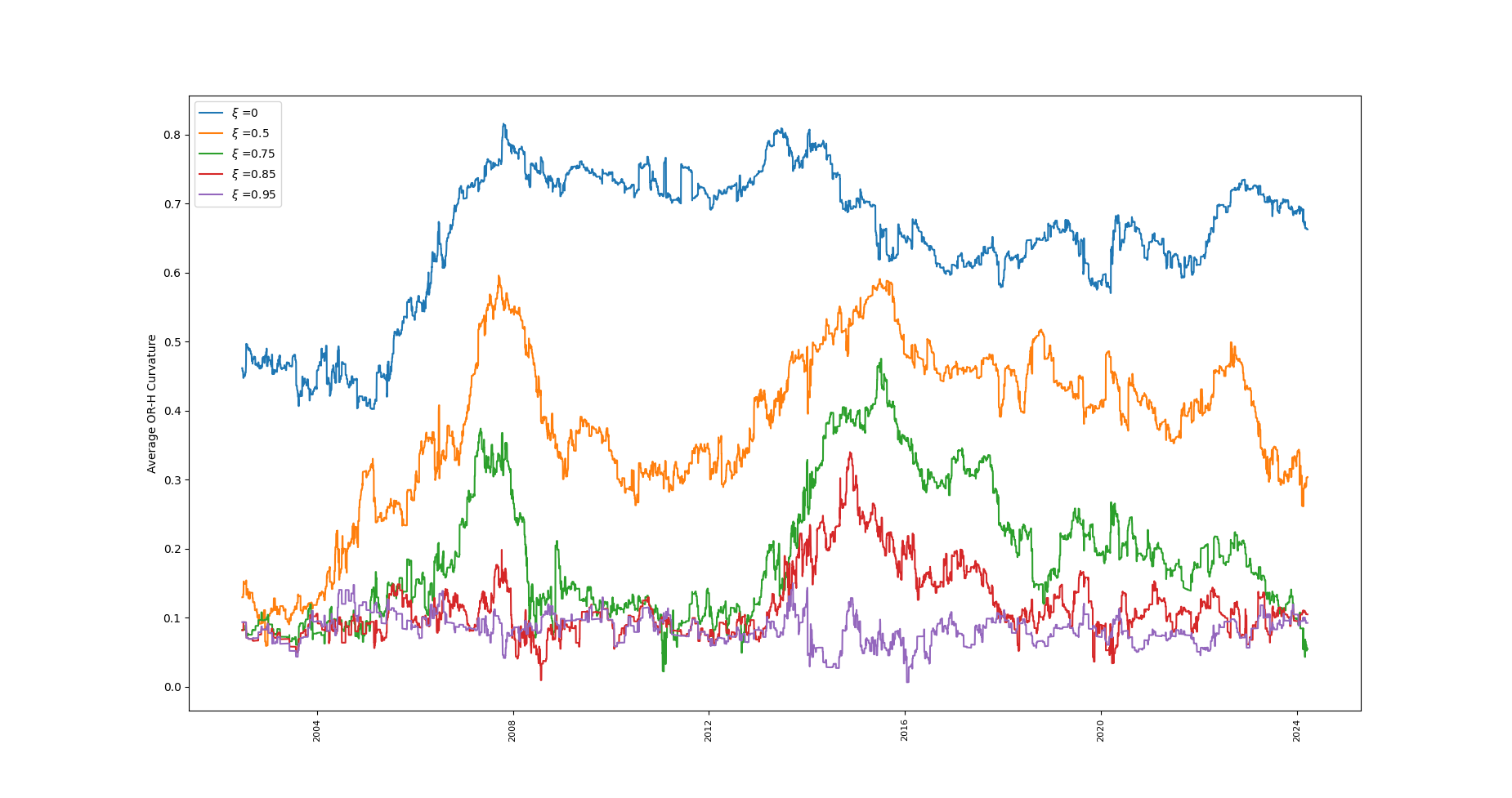}
    \caption{Changes on the method as $\xi$ varies, full time period 1997-2024, on a random sub-graph of 50 companies of the American market.}
    \label{fig:Xi4}
\end{figure}

Figures \ref{fig:Xi1}, \ref{fig:Xi2}, \ref{fig:Xi3} and \ref{fig:Xi4} show that although the general trend is captured by the indicator, changes in $\xi$ are extremely significant for short-period analysis.

Next, we show the same behaviour is observed in the noisy case (arguably the case of most interest) corresponding to smaller values of $T$.

\begin{figure}[H]
    \centering
    \includegraphics[scale = 0.25]{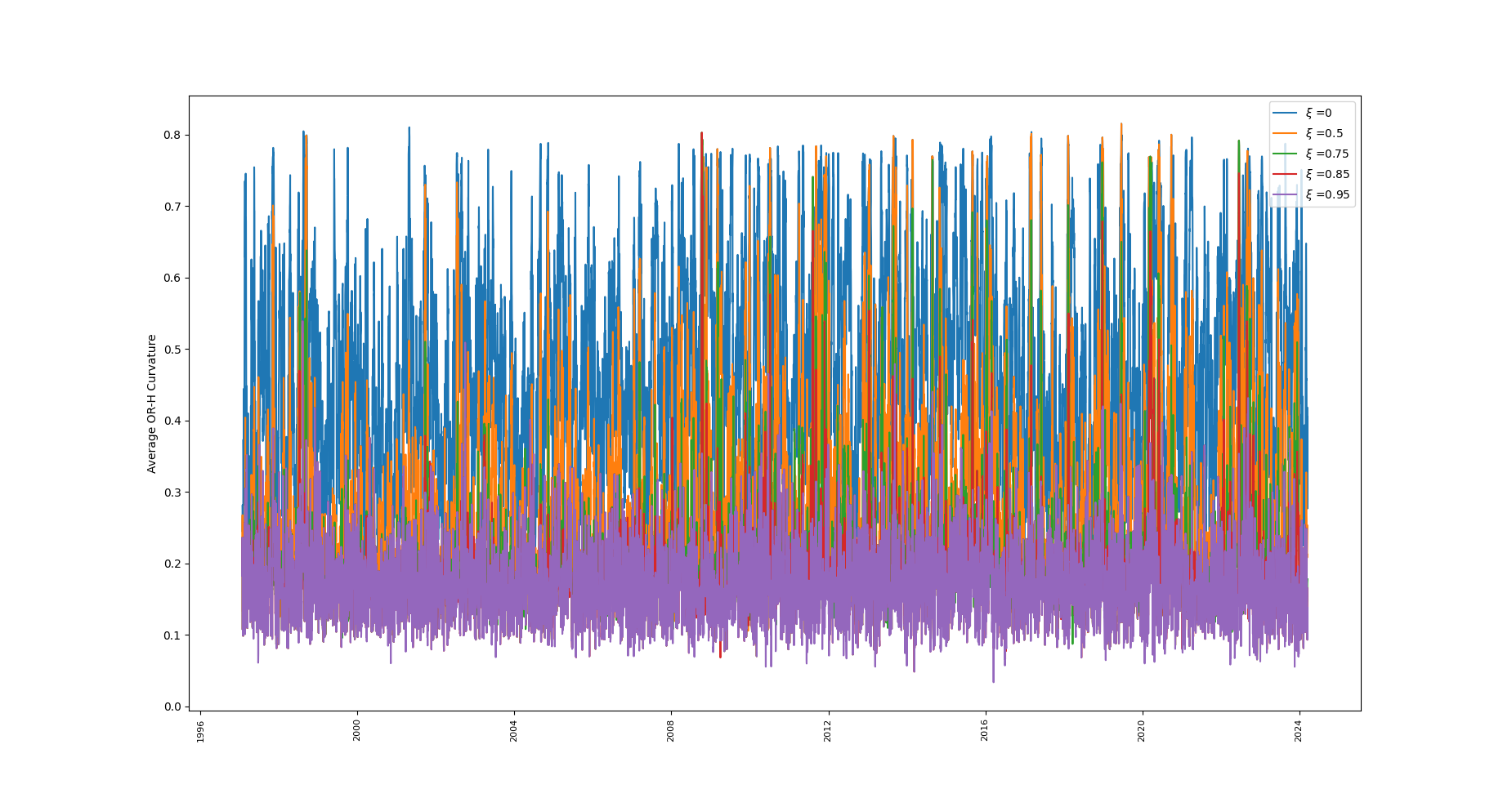}
    \caption{Changes on the method as $\xi$ varies, full time period 1997-2024, on a random sub-graph of 10 companies of the American market.}
    \label{fig:Xi1T22}
\end{figure}
\begin{figure}[H]
    \centering
    \includegraphics[scale = 0.25]{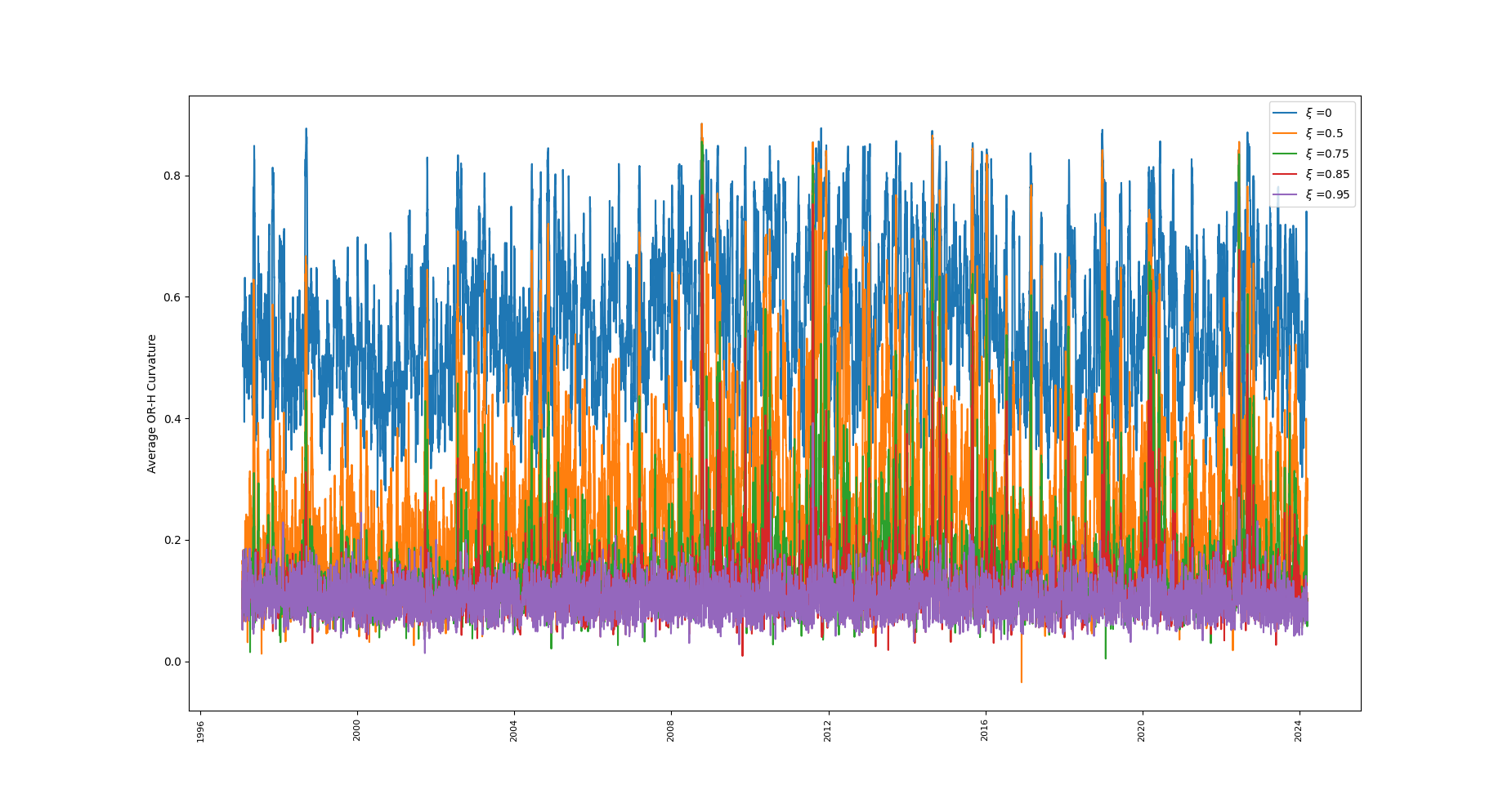}
    \caption{Changes on the method as $\xi$ varies, full time period 1997-2024, on a random sub-graph of 30 companies of the American market.}
    \label{fig:Xi2T22}
\end{figure}
\begin{figure}[H]
    \centering
    \includegraphics[scale = 0.25]{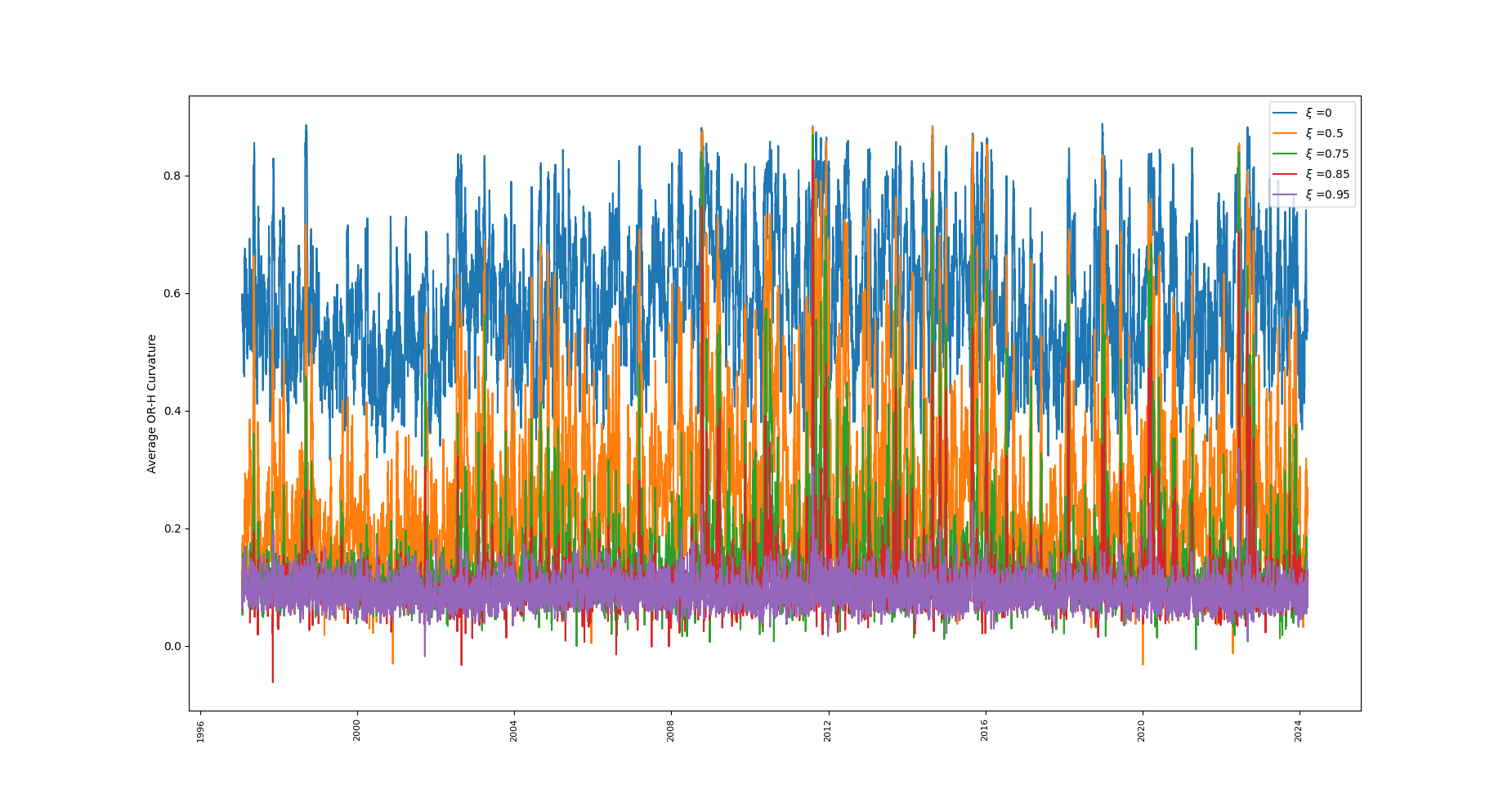}
    \caption{Changes on the method as $\xi$ varies, full time period 1997-2024, on a random sub-graph of 50 companies of the American market.}
    \label{fig:Xi4T22}
\end{figure}

We continue the same simulations for $T = 132$ in order to show that the phenomena explained in the previous section prevails in a slightly smoother case.

\begin{figure}[H]
    \centering
    \includegraphics[scale = 0.25]{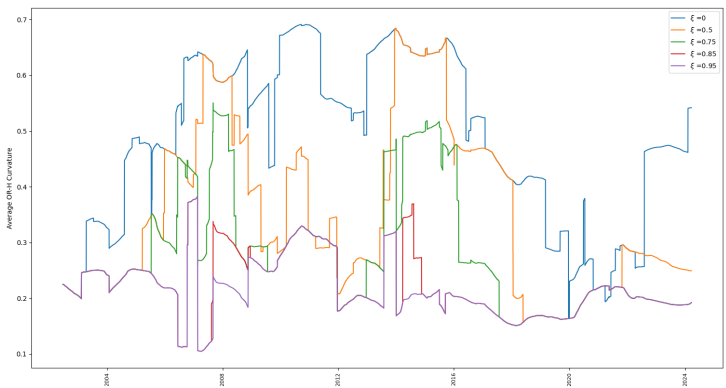}
    \caption{Changes on the method as $\xi$ varies, full time period 1997-2024 on a random sub-graph of 10 companies of the American market, $ T = 22$.}
    \label{fig:Xi1T132}
\end{figure}
\begin{figure}[H]
    \centering
    \includegraphics[scale = 0.25]{Changes_Xi_25.png}
    \caption{Changes on the method as $\xi$ varies, full time period 1997-2024 on a random sub-graph of 25 companies of the American market, $T = 22$.}
    \label{fig:Xi3T132}
\end{figure}
\begin{figure}[H]
    \centering
    \includegraphics[scale = 0.25]{Changes_Xi_2.png}
    \caption{Changes on the method as $\xi$ varies, full time period 1997-2024 on a random sub-graph of 30 companies of the American market, $T = 22$.}
    \label{fig:Xi2T132}
\end{figure}
\begin{figure}[H]
    \centering
    \includegraphics[scale = 0.25]{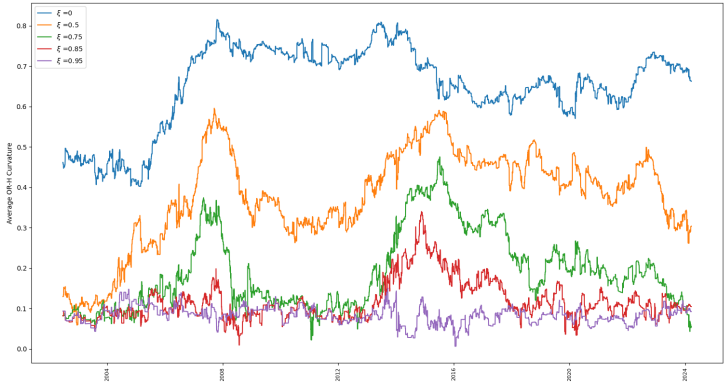}
    \caption{Changes on the method as $\xi$ varies, full time period 1997-2024 on a random sub-graph of 50 companies of the American market, $T = 22$.}
    \label{fig:Xi4T132}
\end{figure}

\subsection{Different market: The Canadian experience} \label{CountriesSection}
In this section we choose companies from the S\&P ETFs for Canadian companies and evaluate the $\ORicci$ estimator.

\begin{figure}[H]
    \centering
    \textbf{Canadian experience 2008-2024}\par
    \includegraphics[scale = 0.5]{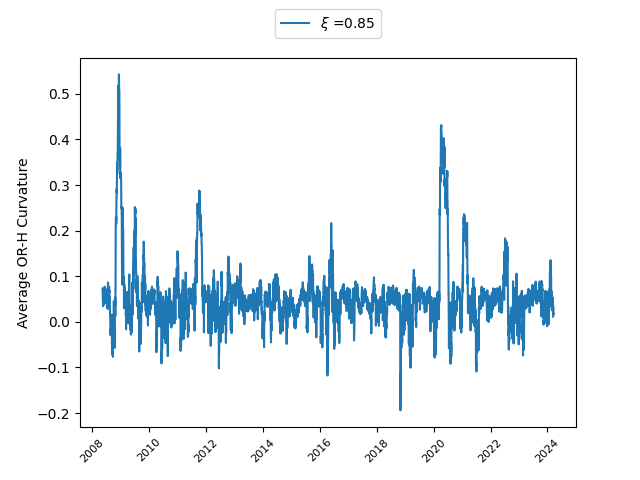}
    \caption{$\ORicci^{Net}$ during the period from 2008-2024 for all companies domiciled to Canada according to TMX.}
    \label{fig:TMX}
\end{figure}

Figure \ref{fig:TMX} shows the relative difference (according to $\ORicci^{net}$) of the financial crisis of 2008 and the impact of the Covid-19 pandemic on $\ORicci^{net}$. From \ref{fig:TMX} we see that the impact of the 2008 crisis on correlations of the stock market is higher than that of the pandemic. In the following figure (Figure \ref{fig:TMXXi}) we see this difference replicated for most values of $\xi$, making the analysis much more robust.

\begin{figure}[H]
    \centering
    \textbf{Canadian experience 2008-2024 changing $\xi$}\par
    \includegraphics[scale = 0.5]{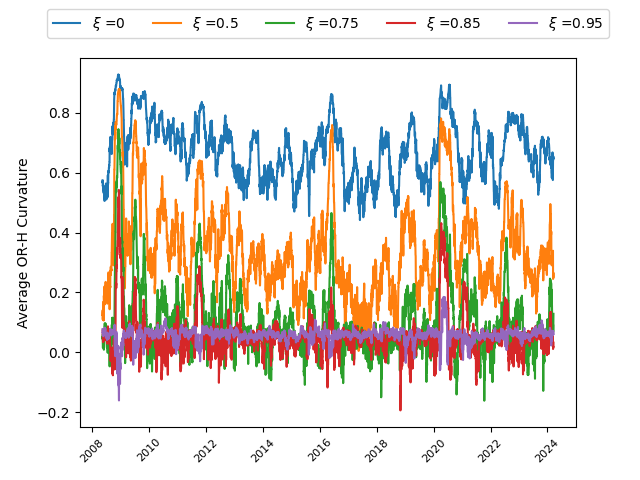}
    \caption{$\ORicci^{Net}$ during the period from 2008-2024 for all companies domiciled to Canada according to TMX, for different values of $\xi$.}
    \label{fig:TMXXi}
\end{figure}
\begin{figure}[H] 
\centering
\textbf{Average $\ORicci$ Curvature of the top 20 Canadian companies from 1997-2024.}\par

\begin{subfigure}{0.48\textwidth}
\includegraphics[width=\linewidth]{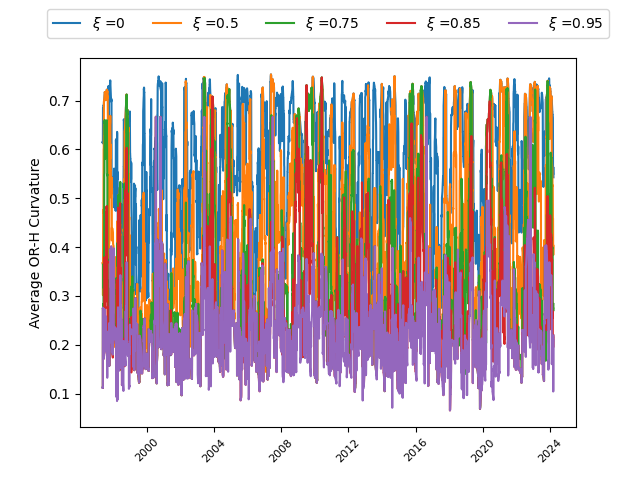}
\caption{$T = 132$} \label{fig:ACAN}
\end{subfigure}\hspace*{\fill}
\begin{subfigure}{0.48\textwidth}
\includegraphics[width=\linewidth]{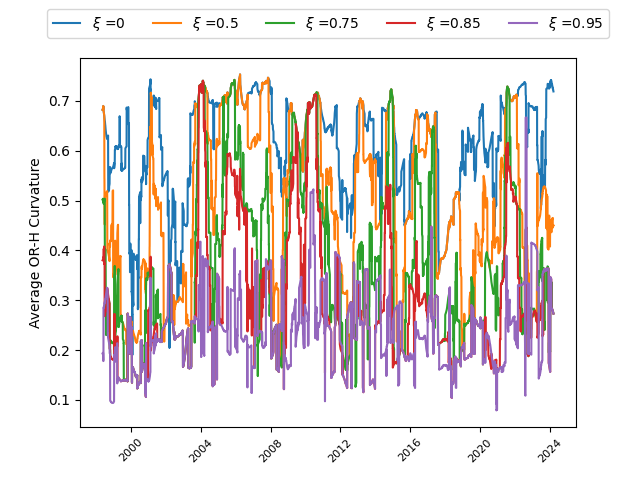}
\caption{$T = 500$} \label{fig:BCAN}
\end{subfigure}

\medskip
\begin{subfigure}{0.48\textwidth}
\includegraphics[width=\linewidth]{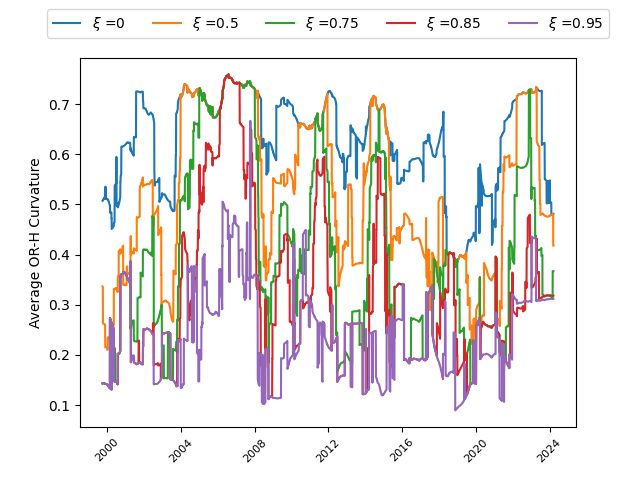}
\caption{$T = 1000$} \label{fig:CCAN}
\end{subfigure}\hspace*{\fill}
\begin{subfigure}{0.48\textwidth}
\includegraphics[width=\linewidth]{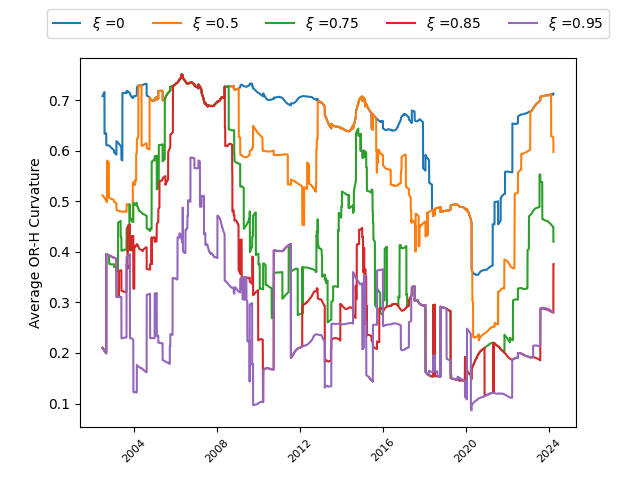}
\caption{ $T  = 2000$} \label{fig:DCAN}
\end{subfigure}
\caption{Average $\ORicci$ curvature (Algorithm \ref{MainAlgorithm}) for the top 20 Canadian companies composing the S\&P BMI Canada for different values of $\xi$.}
\end{figure}

From the sub-graph of the Canadian market with large $T$ (Figures \ref{fig:CCAN} and \ref{fig:DCAN}), we observe a clear increase in fragility during 2004-2008 period and a clear decrease in the indicator for all values of $\xi$ in 2018-2020, followed by another increase at the same time as the Covid-19 pandemic.

\subsection{Not a country: The Tech-sector crunch} \label{SectorsSection}
In this section we look at the $\ORicci$ curvature of a specific subgraph of the market. We look at the technology sector in order to analyze it's fragility before the famous Tech crunch resulting in the lay-offs of 2023-2024. As before, for values of $\xi \in (0.75,0.85)$ the plot seems more stable. Nevertheless, the indicator does not show the fragility of the sector in a predictive manner (one possible explanation for the decrease in curvature are the lay-offs themselves). We use the time period 2018-2024 instead of 1997-2024 as most current tech companies have had their IPOs in more recent times.
\begin{figure}[H]
    \centering
    \textbf{Technology Sector} \\
    \includegraphics[scale = 0.5 ]{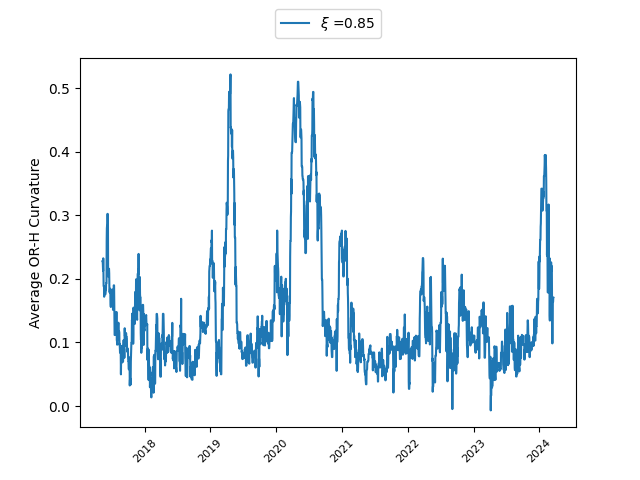}
    \caption{Average $\ORicci$ curvature, time period 2018-2024, all the companies listed in NASDAQ Tech Sector (NDXT), $T = 132$, and $\xi = 0.85$.}
    \label{fig:SectorT132}
\end{figure}
\begin{figure}[H]
    \centering
    \textbf{Technology Sector} \\
    \includegraphics[scale = 0.5 ]{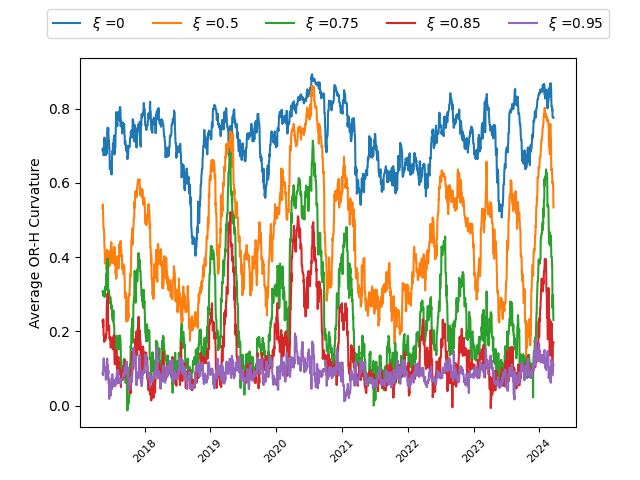}
    \caption{Changes on the method as $\xi$ varies, full time period 1997-2024, all the companies listed in NASDAQ Tech Sector (NDXT), $T = 132$.}
    \label{fig:SectorT132Xi}
\end{figure}
\begin{figure}[H]
    \centering
    \includegraphics[scale = 0.5 ]{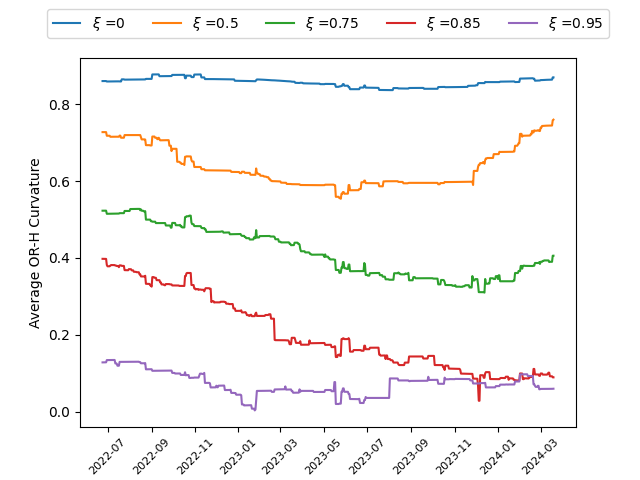}
    \caption{Changes on the method as $\xi$ varies, time period 2018-2024, all companies listed in NASDAQ Tech Sector (NDXT), $T = 2000$.}
    \label{fig:SectorT2000}
\end{figure}
Figure \ref{fig:SectorT2000} shows the complexity of choosing the parameters for the indicator. While we expect large $T$ ($2000$) to show a clearer picture, for the Tech Industry it seems to be ``too big''. In the $T = 2000$ case different values of $xi$ show different behaviours, while $T = 132$ successfully captures some peaks for most values of $\xi$. One possible explanation for this phenomena is that this sector is deemed to be more volatile, in some underlying sense, so large values of $T$ ``average out'' the local tendencies.

\subsection{Implementation of Ollivier-Ricci Gradient Sub-Sampling} \label{ORGSS}
As argued in Section \ref{MSTSection}, the $\ORicci$ curvature of the MST seems to have no direct relation to the $\ORicci$ curvature of the original graph. It is reasonable to study a sub-graph which maximizes or minimizes $\ORicci^{Net}$. Given a random initial vertex and a size $n$, one can generate a sub0graph with $n$ vertices and minimal $\ORicci^{Net}$ following the algorithm in \cite{ORGSSPaper}. Whether computing such an indicator performs better than the one presented in \cite{RicciFinance} is an interesting open question left for further research (with higher available computational power). 

\subsubsection{Algorithm for maximized $\ORicci$ curvature} \label{algorithmGSS}
\begin{algorithm}[H] 
To obtain a fragility indicator from $\ORicci$ gradient descent using $m$ nodes.
\begin{algorithmic}
\State Input: $m,T, \xi, \texttt{startDate}, \texttt{endDate}$
\For{$k \in \{1, \dots,  \texttt{endDate} - \texttt{startDate} - T \}$}
    \State Compute correlations matrix $\rho_{i,j}$ between stocks in period $[\texttt{startDate}, \texttt{startDate} + T]$.
    \State Compute cost function $D_{i,j} = \sqrt{2(1-\rho_{i,j})}$.
    \State Find via gradient descent (as in \cite{ORGSSPaper}) a sub-graph with $m$ nodes and maximal curvature.
    \State Add edge $xy$ to $G$ if $\rho_{x,y} \geq \xi$. 
    \State For every edge $ab$ compute the Ollivier-Ricci Curvature via \begin{equation}
        \kappa(a,b) = 1 - \frac{W_1(\mu_a,\mu_b)}{d(a,b)}.
    \end{equation}
    where, for a neighbor $v$ of node $a$ i.e. ($v \in N_a$) \begin{equation}
        \mu_a(v) = \frac{C_{a,v}}{\displaystyle \sum_{w \in N_a} C_{a,w}}
    \end{equation}
    \State and $d(a,b)$ is the (unweighted) HOP distance: counting the minimum number of steps in the shortest path of the extended graph. 
   \State Compute the average curvature $\ORicci^{net}$ by averaging $k$ over all edges in $G$.
\EndFor
\State Return the average curvature $\ORicci^{net}$
\end{algorithmic}
\end{algorithm}

We see in the next figure that the behaviour for small number $m$ is similar to the indicator of algorithm \ref{MainAlgorithm}. The technique of algorithm of section \ref{algorithmGSS} has the extra advantage of inheriting minimal curvature. In risk modelling and heavy-tail phenomena it is common to consider as best policy the one related to the worst case scenario. Minimum curvature sub-graphs do exactly that. We note that by computational restrictions our numerical results are not optimal. Two possible remedies are:
\begin{enumerate}
    \item Increase available computational power.
    \item Perform a \textit{stochastic} gradient descent on $\ORicci$ curvature. 
\end{enumerate}
We leave both of these directions for forthcoming research.

\subsubsection{Results for deterministic gradient sub-sampling}
\begin{figure}[H] \label{ORGSSPLOT} 
\centering
\textbf{$\ORicci$ indicator from minimum curvature sub-graph, American market.}
\begin{subfigure}{0.48\textwidth}
\includegraphics[width=\linewidth]{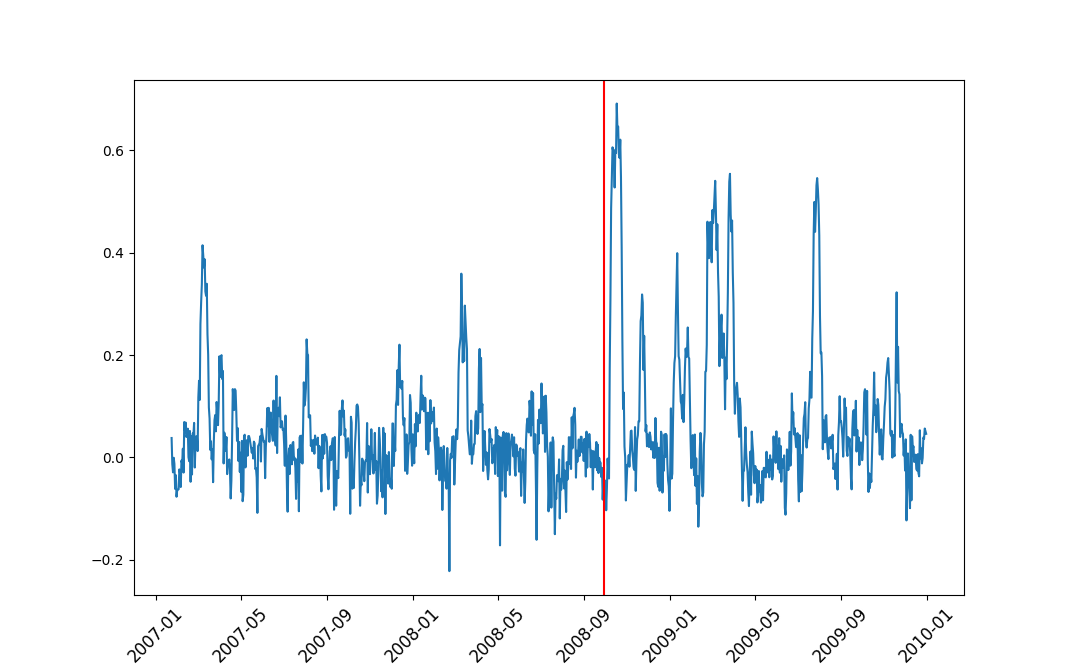}
\caption{$T = 22$ and $\xi = 0.85$.} 
\end{subfigure}\hspace*{\fill}
\begin{subfigure}{0.48\textwidth}
\includegraphics[width=\linewidth]{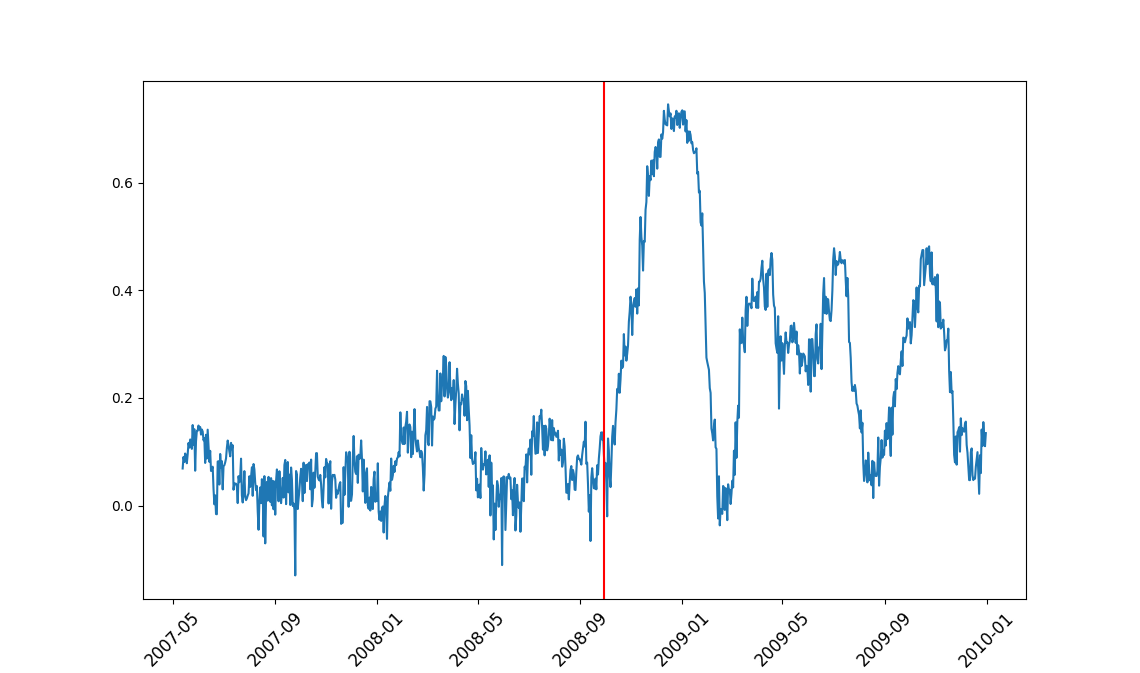}
\caption{$T = 132$ and $\xi = 0.75$.}
\end{subfigure}
\medskip
\begin{subfigure}{0.48\textwidth}
\includegraphics[width=\linewidth]{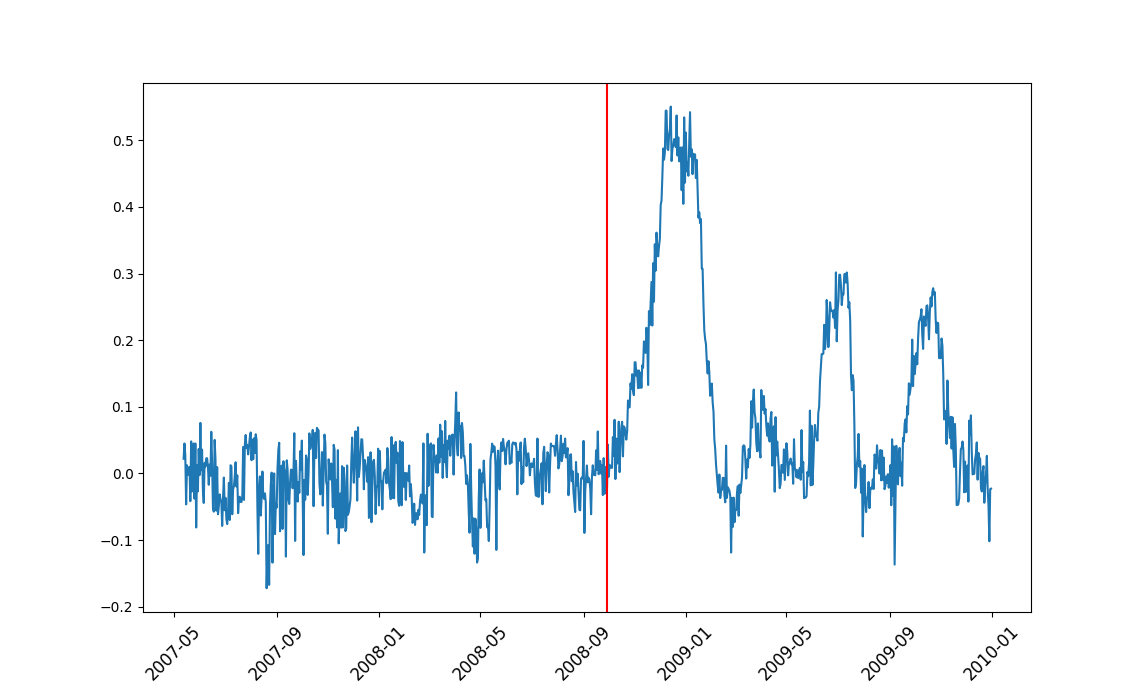}
\caption{$T = 132$ and $\xi = 0.85$.}
\end{subfigure}\hspace*{\fill}
\begin{subfigure}{0.48\textwidth}
\includegraphics[width=\linewidth]{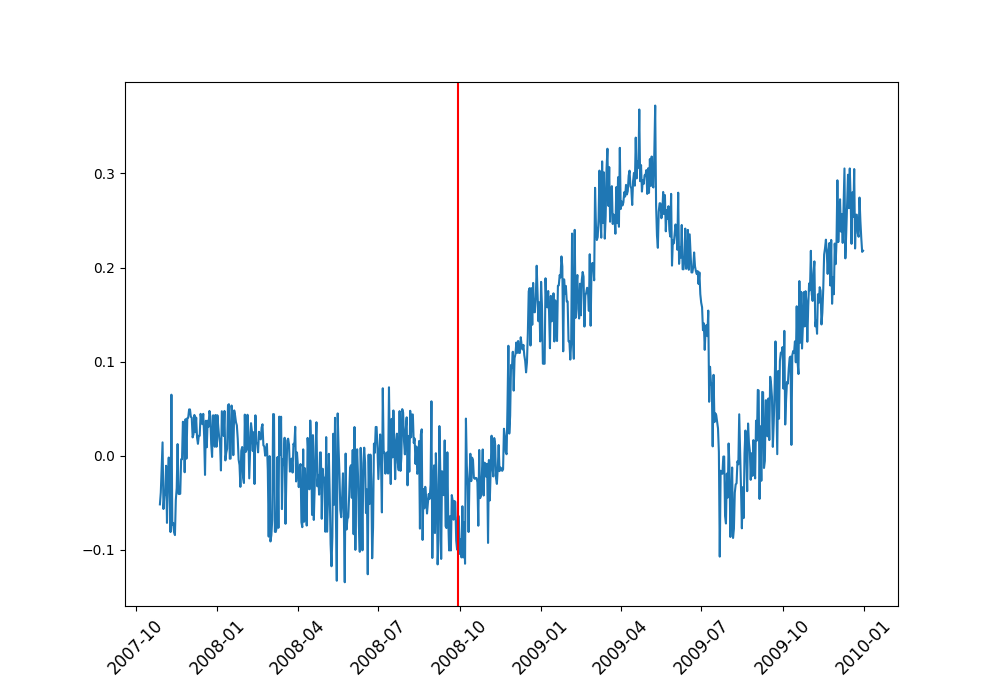}
\caption{ $T  = 300$ and $\xi = 0.85$.}
\end{subfigure}
\caption{$\ORicci$ curvature for a random sub-sample of $20$ stocks with sub-graph of minimum curvature using $20$ nodes and random initialization.}
\end{figure}

We remark that similar graphs using the \textit{maximum} curvature sub-graph yielded noisy graphs with no apparent trends.

\section{Conclusions and Further work}
In section \ref{AlgorithmSection} we posed the following questions, which we now answer:
\begin{enumerate}
    \item \textbf{Q}: How well does the indicator measure crisis? \\
    \textbf{Answer:} For certain values of $\xi$ (whose precise determination is not an easy task), the indicator measures times of financial crisis in length and depth. The indicator allows us to compare different crisis in terms of the hidden interconnections of financial agents.
    \item \textbf{Q}: What happens if one does not add high-value links using $\xi$ ? \\
    \textbf{Answer:} As observed in section \ref{XiSection}, if $\xi = 1$ (i.e. no link is added), the MST is not capable of showing the crashes.
    \item \textbf{Q:} What is the $\xi$- elasticity? How does one choose $\xi$? \\
    \textbf{Answer:} The choice of $\xi$ is fundamental to the analysis and depends not only on the distance function used, but also on the underlying financial market. The determination of $\xi$ should be empirical (or probably Bayesian) and heuristically $0.75 \leq \xi \leq 0.9$ can perform well. 
    \item \textbf{Q}: As $T$ grows does the time series regularize? \\
    \textbf{Answer:} In all cases bigger values of $T$ showed a clearer representation of the crisis. As $\ORicci^{Net}$ is supposed to be used as an ex-post tool rather than a predictor element, considering the biggest $T$ available is recommended.
    \item \textbf{Q}: As $T$ decreases do we approach white noise? \\
    \textbf{Answer:} Yes, as shown in section \ref{TSection}. This observation should alert a modeller that choosing a small value of $T$ (while desirable in applications) comes with bigger risk. 
    \item \textbf{Q}: What is the impact of using the MST? Are there other better subgraphs? \\
    \textbf{Answer:} If computationally plausible, one should attempt to use the closest representation of the original network graph. This is viable when studying interactions between sectors or small markets. In the case of large amounts of data, one can use MST, keeping in mind the examples in Section \ref{MSTSection}.
    \item \textbf{Q:} What happens for different cost functions $D_{i,j}$?  \\
    \textbf{Answer:} By relative stability of MST to weights (see \cite{StabilityOfMST}) the dependence on the cost is not crucial. Similar costs yield similar MSTs and when high-value links are added, the graphs behave similarly (the adding is done through correlations and not weights).
    \item \textbf{Q:} Is the average the most efficient way to measure risk? \\
    \textbf{Answer:} As explained in Example \ref{WeightIndependence}, there are many benefits of using the average Ricci curvature but there are interesting properties of other estimators which could be useful. The weight independence explained in Section \ref{MetricsSection} is the main motivation.
    \item \textbf{Q:} Is using the hop distance the only way to obtain the optimal structure or should it be used to compute the curvature too? \\
    \textbf{Answer:} Numerical experiments not included in this work have shown that using the weighted curvature is even less precise than the algorithm presented in section \ref{AlgorithmSection}.
\end{enumerate}

\subsection{Interpretation of the indicator}
Throughout this work we have shown that $\ORicci^{net}$ is a good indicator of crisis periods. The trends in the average Ollivier-Ricci curvature reliably indicate the size and duration of the instability periods of financial markets. For this reason, the indicator can be an important tool for economical and financial analysis. 

Due to it's dependence on the threshold variable $\xi$, as indicated in section \ref{XiSection}, one should be careful when using $\ORicci^{net}$ to predict short time tendencies of the market. In sections \ref{CountriesSection} and \ref{SectorsSection} we observed that the value of $\xi$ the modeller should use depends on the underlying financial market. Thus, it is reasonable to use a Bayesian approach to the determination the threshold. After calibration, $\xi$ should be understood as a hypothesis on which the inference of crisis from $\ORicci^{net}$ depends on. As long as this hypothesis is fully assumed, the indicator adds value to our understanding of the random dynamics of financial systems.

Note that Algorithm \ref{MainAlgorithm} depends on parameters $\xi$ and $D_{i,j}$ together with the use of the MST as the skeleton approach to the underlying network. The use of the MST is currently the most computationally feasible approach to markets with large amounts of data. If the market of interest is of small enough size one can use \eqref{OllivierRicci} directly from the constructed network, avoiding having to reduce to the MST and adding links via $\xi$.

\subsection{Taking away data could be crucially unfair}
One important observation about Algorithm \ref{MainAlgorithm} is that one should try to avoid dropping incomplete data. Say you want to analyze a market in a period of time where a crash happened, the crash may be a result of subsequent bankruptcy of entities with hidden connections. Computing the indicator after removing such entities misses the dependence on entities which can be fundamental to the crisis. For example, if several banks declare bankruptcy in the time period of study, one does not have complete data after their individual crashes, but it is these crashes that may explain the financial default of the system. Therefore, given a period of time, one should consider every entity which has participated in that period.

\subsection{Further research}
In this section we present the lines of investigation which remain open and we believe are of most interest.

\subsubsection{Relation between $\ORicci$ curvature and connectivity of a graph.}
We can understand Proposition \ref{firstbound} as a connection between curvature and connectivity. If a graph is highly connected, then the degrees of the vertices are very high, and the majority of edges should not significantly increase the hop distance if removed. In this case, \eqref{BoundWithDegrees} tells us that adding an edge will not result in a big change in curvature.

\subsubsection{The exponential moving average}
If we want to give the most recent data more influence, it is common practice in mathematical finance to use the exponential moving average. What happens if the exponential moving average is applied to the time series before any other calculation? Does it yield a better indicator in the hopes of avoiding the white noise in the $T \to 0$ limit? 

\subsubsection{The Beckman Problem}
Although equation \eqref{BoundWithDegrees} is a good connection between curvature and connectivity, there might be a stronger connection. The relation between the Beckmann and the Kantorovich Problem in the continuous case (see \cite[Theorem 4.6]{Santambrogio}) is well-known, whether or not similar relations hold in the discrete case (relating the max-flow min-cut problem would to the Wasserstein (Kantorovich) formulation of curvature) is unbeknownst to the authors. In the case where the weighted distance is used (different to the hop distance), \cite[Theorems 1,2]{WeightedTreeRicci}  are well known. We expect similar results for the hop-case.

\end{document}